%% bare_jrnl.tex
%% V1.4
%% 2012/12/27
%% by Michael Shell
%% see http://www.michaelshell.org/
%% for current contact information.
%%
%% This is a skeleton file demonstrating the use of IEEEtran.cls
%% (requires IEEEtran.cls version 1.8 or later) with an IEEE journal paper.
%%

%\documentclass[onecolumn,draftclsnofoot,10pt]{IEEEtran}
\documentclass[journal]{IEEEtran}

\usepackage{epstopdf}
\usepackage[draft]{hyperref} % optional
\usepackage{setspace}
\usepackage{amsmath}
\usepackage{amssymb}
\usepackage{stfloats}
\usepackage{amsthm}
\usepackage{multirow}
\usepackage{amsfonts}
\usepackage{color}
\usepackage{graphicx}
\usepackage{subfigure}  % for making a row or column of subfigures
\usepackage[]{algorithmicx}
\usepackage{algpseudocode,algorithm}
\usepackage{cite}
\usepackage{mathtools}
\usepackage{stmaryrd}
\usepackage[mathscr]{euscript}
\usepackage{array}
\usepackage{mathrsfs}

\usepackage{soul}

\newtheorem{theorem}{Theorem}

\usepackage{authblk}
% \usepackage{hyperref}
% \usepackage[subtle,tracking = normal, charwidths = tight]{savetrees}
% \hyphenation{op-tical net-works semi-conduc-tor}

\author{   Hila Naaman, {\it Student Member, IEEE}, Daniel Bilik, Shlomi Savariego, Moshe Namer, Yonina C. Eldar, {\it Fellow, IEEE}% <-this % stops a space
	\thanks{\scriptsize The authors are with the Faculty of Mathematics
    and Computer Science, Weizmann Institute of Science, Rehovot 76100, Israel
    (e-mail: hilanaaman10@gmail.com; daniel.bilik@weizmann.ac.il;  shlomi.savariego@weizmann.ac.il; moshe.namer@weizmann.ac.il; yonina@weizmann.ac.il).}
	\thanks{\scriptsize {
    This research was partially supported by the Israeli Council for Higher Education (CHE) via the Weizmann Data Science Research Center, by a research grant from the Estate of Tully and Michele Plesser, by the European Union’s Horizon 2020 research and innovation program under grant No. 646804-ERC-COG-BNYQ, by the Israel Science Foundation under grant no. 0100101, and by the Tom and Mary Beck Center for Renewable Energy as part of the Institute for Environmental Sustainability (IES) at the Weizmann Institute of Science.}\vspace{-0.3cm}}
}
% \affil{Faculty of Math and Computer Science, Weizmann Institute of Science, Israel\\
% Emails: hila.naaman@weizmann.ac.il, satish.mulleti@gmail.com, yonina.eldar@weizmann.ac.il}

\begin{document}

\title{ECG-TEM: Time-based sub-Nyquist sampling for ECG signal reconstruction and Hardware Prototype}
\maketitle

\begin{abstract}
Portable heart rate monitoring (HRM) systems based on electrocardiograms (ECGs) have become increasingly crucial for preventing lifestyle diseases. For such portable systems, minimizing power consumption and sampling rate is critical due to the substantial data generated during long-term ECG monitoring. The variable pulse-width finite rate of innovation (VPW-FRI) framework provides an effective solution for low-rate sampling and compression of ECG signals.
We develop a time-based sub-Nyquist sampling and reconstruction method for ECG signals specifically designed for HRM applications. Our approach harnesses the integrate-and-fire time-encoding machine (IF-TEM) as a power-efficient, time-based, asynchronous sampler, generating a sequence of time instants without the need for a global clock.
The ECG signal is represented as a linear combination of VPW-FRI pulses, which is then subjected to pre-filtering before being sampled by the IF-TEM sampler. A compactly supported robust filter with a frequency-domain alias cancellation condition is used to combat the effects of noise. Our reconstruction process involves consecutive partial summations of discrete representations of the input signal derived from the series of time encodings, further enhancing the accuracy of the reconstructed ECG signals. Additionally, we introduce an IF-TEM sampling hardware system for ECG signals, implemented using an analog filter device. The firing rate is 42-80Hz, equivalent to approximately 0.025-0.05 of the Nyquist rate. Our hardware validation bridges the gap between theory and practice and demonstrates the robust performance and practical applicability of our approach in accurately monitoring heart rates and reconstructing ECG signals. 
\end{abstract}

%
% Note that keywords are not normally used for peerreview papers.
\begin{IEEEkeywords}
Analog-to-digital converter (ADC), Time-encoding machine (TEM), ECG signal, time-based sampling, integrate-and-fire, heart rate monitoring, sampling theory.
\end{IEEEkeywords}
\IEEEpeerreviewmaketitle

\section{Introduction}
\begin{figure}[t!]
		\centering
		\includegraphics[width = 0.5\textwidth]{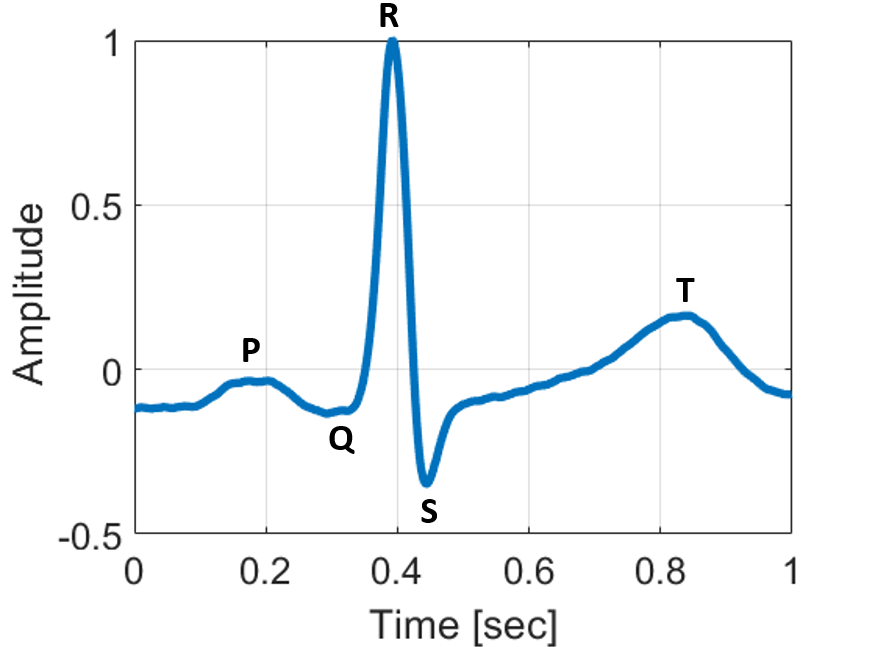}
		\caption{Example of standard ECG signal which is composed of five types of pulses, labeled P, Q, R, S, T.}
		\label{fig:ECG_COMP}
\end{figure}

% Cardiovascular diseases are a major global health concern, making early detection and management of heart-related symptoms crucial \cite{WHO,kotseva2021primary}. In today's age, where ECG and heart rate monitoring (HRM) play a pivotal role in detecting and managing fatal cardiovascular indicators, the development of personalized and tailored HRM devices is of utmost importance. However, current HRM monitoring, which is often based on ECG monitoring methods, faces challenges due to the necessity for continuous signal acquisition, resulting in a substantial amount of data that needs processing, storage, and transmission \cite{majumder2018noncontact,wang2019energy}. Moreover, portable ECG monitors need to consider energy consumption, as a higher sampling rate requires a faster and more precise clock during the analog-to-digital conversion process, leading to increased power demands \cite{antony2018asynchronous,zhao2017random,yenduri2012low}. Thus, addressing the challenges posed by traditional synchronous clock-based samplers becomes essential. As a result, asynchronous power-efficient samplers and sub-Nyquist sampling approaches are demanded to reduce the power consumption and the sampling rate without compromising the signal accuracy of the monitoring process.
Cardiovascular diseases represent a formidable global health challenge, necessitating prompt detection and effective management of heart-related conditions \cite{WHO,kotseva2021primary}. Electrocardiogram (ECG) and heart rate monitoring (HRM) techniques play an indispensable role in identifying and managing potentially fatal cardiovascular indicators. Consequently, the development of personalized and tailored HRM devices is of paramount importance. 

Current HRM methodologies, heavily reliant on continuous ECG monitoring, face significant hurdles due to the immense volume of data requiring processing, storage, and transmission \cite{majumder2018noncontact,wang2019energy}. Furthermore, portable ECG monitors must handle power consumption constraints, as higher sampling rates necessitate faster and more precise clocks during analog-to-digital conversion, inevitably leading to increased energy demands \cite{antony2018asynchronous,zhao2017random,yenduri2012low}. Addressing the limitations imposed by traditional synchronous clock-based samplers is therefore important. As a result, there is an urgent need for asynchronous, power-efficient samplers and sub-Nyquist sampling approaches capable of reducing power consumption and sampling rates without compromising the signal accuracy of the monitoring process, which are important for HRM.

% Finite-rate-of-innovation (FRI) sampling theory, provides an effective solution for sampling parametric signals like ECGs while transcending the bandwidth limitations of the traditional Nyquist theorem. The FRI theory demonstrates that parametric signals can be accurately sampled at a rate greater than or equal to their innovation rate, enabling meager sampling rates while still capturing essential signal features with unparalleled precision \cite{vetterli2002sampling,blu2008sparse,eldar2015sampling,michaeli2011xampling}. 
Finite-rate-of-innovation (FRI) signals, commonly employed in time-of-flight applications like radar and ultrasound imaging, offer a powerful framework for sub-Nyquist sampling \cite{eldar2015sampling}. This approach, facilitated by tailor-made sampling kernels, ADCs, and parameter estimation blocks, enables accurate signal reconstruction at reduced sampling rates, thereby minimizing cost and power consumption. FRI theory showcases the efficacy of sampling parametric signals, such as ECGs, at rates equal to or higher than their innovation rate, ensuring precise capture of essential signal features \cite{vetterli2002sampling,blu2008sparse,eldar2015sampling,michaeli2011xampling}.

% ECG signals possess an intricate structure comprising five distinct pulse types (see Fig. \ref{fig:ECG_COMP}), each linked to specific events in the cardiac cycle \cite{lee2018individual}. ECG signals have been extensively studied using sub-Nyquist FRI sampling techniques to minimize the required number of samples. FRI theory models ECG signals as a weighted sum of various fixed-width pulses, including asymmetric pulses (AP) based on Gaussian shapes, wavelet basis functions, or Lorentzian functions. These asymmetric pulse models aim to optimize the reduction of model mismatch error \cite{quick2012extension,sandryhaila2011efficient,rudresh2018asymmetric,nair2014p,hao2006compression}. 
% However, previous attempts employing fixed pulse width models have encountered accuracy challenges when applied to ECG signals due to their inherent variability in pulse shapes.
Sub-Nyquist FRI sampling techniques have been extensively studied for ECG signals. However, the use of fixed pulse width models faces challenges due to the varying shapes of ECG pulses. ECG signals consist of five distinct pulse types (see Fig. \ref{fig:ECG_COMP}), each corresponding to specific events in the cardiac cycle. Recent research has focused on alternative models, such as asymmetric pulses based on Gaussian, wavelet, or Lorentzian functions, aimed at reducing model mismatch error \cite{quick2012extension,sandryhaila2011efficient,rudresh2018asymmetric,nair2014p,hao2006compression}.

A variable-pulse-width (VPW) FRI model, introduced by Baechler et al. \cite{baechler2017sampling}, offers enhanced flexibility by incorporating pulse width and asymmetry parameters. However, the VPW-FRI model exhibits instability in noisy environments. Huang et al. \cite{huang2018optimization} attempted to improve the VPW-FRI using Particle Swarm Optimization (PSO), but encountered high time complexity in the reconstruction algorithm. In a subsequent study, Huang et al. \cite{huang2022sub} proposed an extension to the VPW-FRI model, employing signal differentiation to reduce model mismatch error for ECG signals. Nevertheless, this differentiation-based method compromises noise resistance. Additionally, the sampling rate of this method is $(R+1)$ times the minimal theoretical sampling rate, where $R$ is the highest order of the derivative, as it necessitates more parameters than the conventional VPW-FRI approach, potentially limiting its efficiency.

The aforementioned techniques all utilize clock-based samplers. Time-encoding ADCs, which are asynchronous samplers, present a unique and promising approach by encoding time intervals between events and directly converting them into digital signals. This method not only simplifies the ADC's design but also reduces power consumption and can minimize the required number of bits while ensuring accurate signal recovery \cite{naaman2021time}. 
The Integrate-and-Fire Time Encoding Machine (IF-TEM) is a time-based asynchronous sampler that eliminates the need for a global clock \cite{lazar2004time}. Its operation involves integrating an analog signal and comparing it with a threshold. Whenever the threshold is crossed, the time instances are recorded, effectively encoding the information contained in the analog signal \cite{lazar2004time}. The IF-TEM's ADC is specifically designed to be low-power, compact, and high-resolution, making it well-suited for continuous, ambulatory long-term monitoring applications \cite{nallathambi2015pulse,rastogi2011integrate}.
Sampling and perfect reconstruction of signals from TEM outputs have been extensively studied in the literature \cite{lazar2004time,adam2020sampling,alexandru2019reconstructing,naaman2021fri,rudresh2020time,gontier2014sampling,naaman2022uniqueness,naaman2021time,nallathambi2013integrate,nallathambi2015pulse}. The authors in \cite{rastogi2011integrate} proposed a low-power hardware implementation of IF-TEM suitable for ultra-low power body sensors. However, their approach maintains the sampling rate at least at the Nyquist rate, limiting the potential power savings.
To address this limitation, the authors in \cite{nallathambi2013integrate,nallathambi2015pulse} leveraged the sparse signal structure inherent in ECG signals, successfully reducing the IF-TEM sampling rate by a factor of four compared to the Nyquist rate. Nonetheless, their approach focuses only on detecting the middle part of each heartbeat, namely the QRS complex. 

\begin{figure}[t!]
	\vspace{-0.5cm}
		\centering
		\includegraphics[width = 0.5\textwidth]{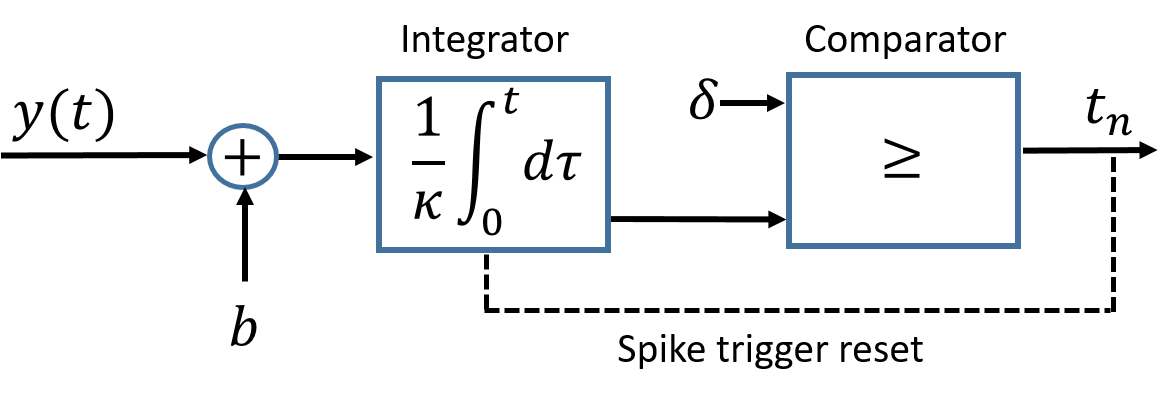}
		\vspace{-0.5cm}
		\caption{Time encoding machine with spike trigger reset. The input is biased by $b$, scaled by $\kappa$, and integrated. A time instant is recorded when the threshold $\delta$ is reached, after which the value of the integrator resets.}
		\label{fig:Vetterli_TEM}
	\end{figure}

In this paper, we present a robust time-based sub-Nyquist sampling and recovery algorithm for analog ECG signals using the IF-TEM ADC. Our approach uses a sampling kernel with enhanced noise resilience and applying a frequency-domain signal reconstruction method based on \cite{naaman2021fri,vetterli2002sampling,tur2011innovation}. We validate our method by applying it to heart rate monitoring (HRM) calculations based on the recovered ECG signal. The results demonstrate high accuracy compared to known HRM evaluations obtained using synchronous ADCs and compared to existing methods \cite{baechler2017sampling,huang2022sub}. Specifically, our considered sampling rate is only twice higher than the minimal theoretical rate proposed by \cite{baechler2017sampling}, and 66.66 times less than the Nyquist rate of 2000 Hz for synchronous ADCs \cite{schellenberger2020dataset}. Our study exhibits superior precision for HRM due to enhanced ECG reconstruction compared to the techniques proposed by \cite{baechler2017sampling,huang2022sub}, as evidenced by various statistical metrics.

To bridge the gap between theory and practice, we developed an IF-TEM ADC, providing a hardware validation of time-based sub-Nyquist sampling and reconstruction of noisy ECG signals at only twice the minimal theoretical rate. Our IF-TEM ADC differs from the one presented in \cite{naaman2023hardware} by incorporating a different filter necessary for ECG reconstruction and a distinct integrator implementation. Our algorithm and hardware successfully monitor and recover ECG signals while operating at significantly lower rates (20-40 times) than the Nyquist rate. The hardware experiments using our IF-TEM sampler demonstrate the empirical robustness of our method in a realistic, noisy setting, thereby validating its practical utility.

This paper is organized as follows. In Section \ref{sec:TEM_and_problem} we review IF-TEM, followed by the problem formulation. Section \ref{sec:theory} presents our main result for robust sampling and recovery of ECG signals from the IF-TEM measurements and simulation results. 
Our hardware prototype is demonstrated in Section \ref{sec:HW_spec}.
Concluding remarks are presented in Section \ref{sec:Conclusions}.

\section{Preliminaries and Problem Formulation}
\label{sec:TEM_and_problem}
\subsection{Time Encoding Machine}
\label{subsec:TEM}
IF-TEM is an asynchronous ADC that operates by encoding the input analog signal into a sequence of time instants or firing times. As illustrated in Fig. \ref{fig:Vetterli_TEM} and described in \cite{lazar2004perfect}, the IF-TEM takes a bounded analog signal $y(t)$ as input, where $|y(t)| \leq c$ for some positive constant $c$.
The IF-TEM is characterized by three positive real-valued parameters: $b$, $\kappa$, and $\delta$. The input signal $y(t)$ is first offset by a bias value $b$, where $b > c$, ensuring that the biased signal $y(t) + b$ is strictly positive. This biased signal is then integrated and scaled by a factor $\kappa$. Whenever the integrated and scaled signal reaches a predetermined threshold $\delta$, the corresponding time instant $t_n$ is recorded, and the integrator is reset to zero.
This process repeats, with the IF-TEM capturing the next time instant $t_{n+1}$ such that the integral of the biased and scaled input signal over the interval $[t_n, t_{n+1}]$ equals the threshold $\delta$, as expressed by
% We consider an IF-TEM as depicted in Fig. \ref{fig:Vetterli_TEM}, whose operating principle is the same as in \cite{naaman2021fri}. The input to the IF-TEM is a bounded signal $y(t)$, and the output is a series of firing or time instances. An IF-TEM is parametrized by positive real numbers $b$, $\kappa$, and $\delta$ and works as follows: A bias $b$ is added to a $c$-bounded signal $y(t)$ where $|y(t)|\leq c<b<\infty$, and the sum is integrated and scaled by $\kappa$. When the resulting signal reaches the threshold $\delta$, the time instant $t_n$ is recorded, and the integrator is reset. The process is repeated to record subsequent time instances, i.e., if a time instant $t_n$ was recorded, the next time instant $t_{n+1}$ satisfies
\begin{align}
   \frac{1}{\kappa}\int_{t_n}^{t_{n+1}} (y(s)+b)\, ds = \delta.
   \label{eq:y_t_relation_st}
\end{align}
% Thus, the input signal $y(t)$ and the IF-TEM output, which is the set of time instances $\{ t_n, n\in\mathbb{Z} \}$ are related by \eqref{eq:y_t_relation_st}.
The time encodings ${t_n, n \in \mathbb{Z}}$ obtained from the IF-TEM constitute a discrete representation of the continuous-time analog input signal $y(t)$. The objective is to reconstruct the original analog signal $y(t)$ from these time encodings. Typically, the reconstruction process involves utilizing an alternative discrete representation $\{y_n, n \in \mathbb{Z}\}$, defined as \cite{lazar2004perfect,naaman2021fri,naaman2022uniqueness,adam2019multi,alexandru2020sampling}:
\begin{equation}
y_n \triangleq \int_{t_n}^{t_{n+1}}y(s)\, ds =  -b(t_{n+1}-t_n)+\kappa\delta.
\label{eq:trigger0}
\end{equation}
These measurements $\{y_n, n \in \mathbb{Z}\}$ are derived from the time encodings $\{t_n, n \in \mathbb{Z}\}$ and the IF-TEM parameters $\{b, \kappa, \delta\}$.

While the reconstruction methods vary for different signal classes, perfect recovery of any signal requires that the firing rate, determined by the time encodings, satisfies a lower bound that depends on the degrees of freedom of the signal \cite{lazar2004perfect}. Notably, the firing rate of an IF-TEM is bounded both from above and below, with the bounds being functions of the IF-TEM parameters and an upper bound on the signal amplitude.
By leveraging \eqref{eq:trigger0} and the fact that $|y(t)| \leq c$, it can be shown that for any two consecutive time instances \cite{lazar2004perfect}:
\begin{equation}
\frac{\kappa\delta}{b+c} \leq t_{n+1} - t_n \leq \frac{\kappa\delta}{b-c}.
\label{eq:consecutive_time}
\end{equation}

\label{subsec:problem}

\begin{figure*}[hbt!]
	\vspace{-0.5cm}
		\centering
		\includegraphics[width = 1\textwidth]{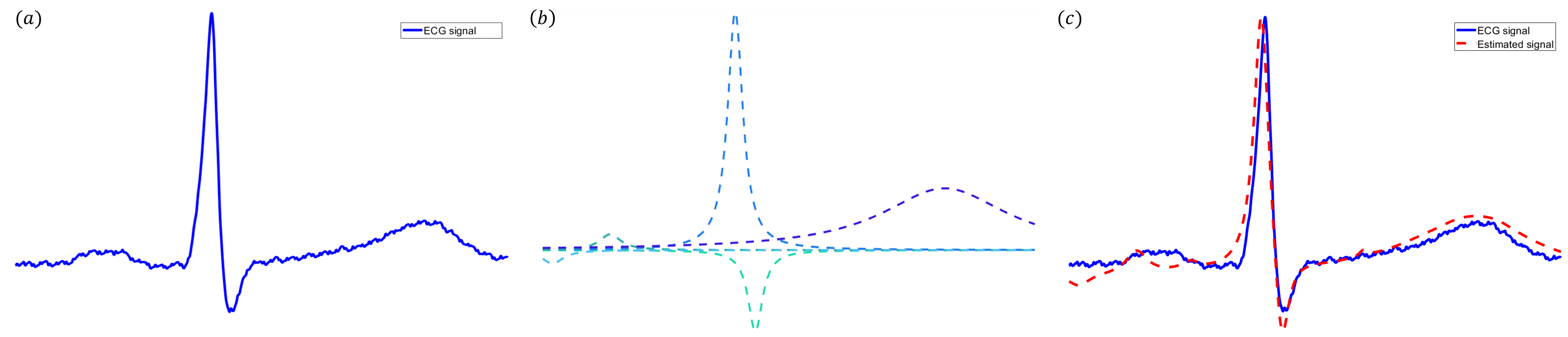}
		\vspace{-0.5cm}
		\caption{Illustration showcasing ECG signal decomposition into VPW-FRI asymmetric pulses: (a) a single pulse cycle of one second, (b) its division into five distinct asymmetric VPW-FRI pulses; (c) the reconstructed ECG signal is achieved through the summation  of the five pulses.}
		\label{fig:decompose}
\end{figure*}

\begin{figure}[h!]
    \centering
    \includegraphics[width= 0.5\textwidth]{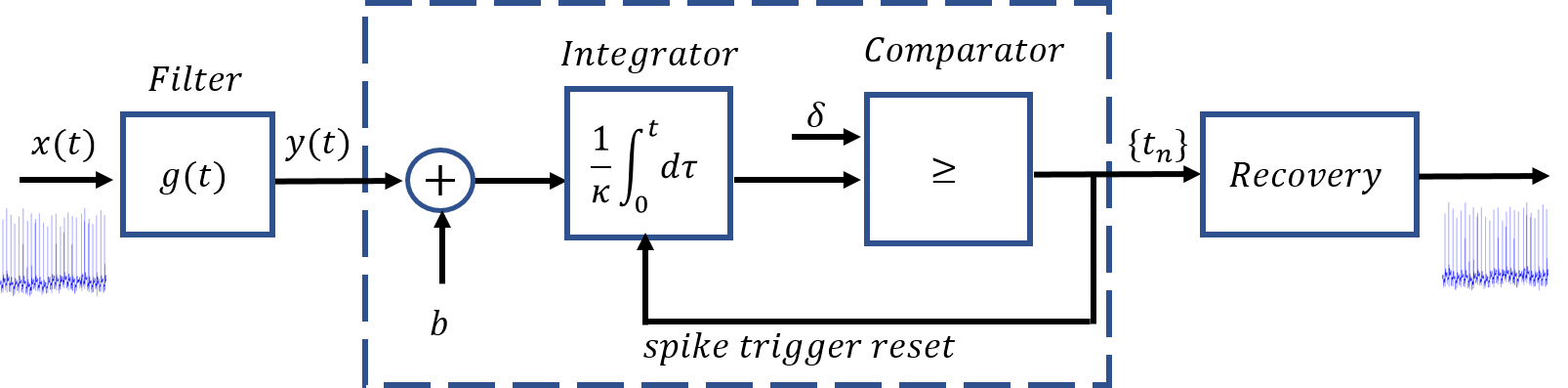}
    \caption{IF-TEM Sampling: Continuous-time ECG signal $x(t)$ is filtered through a sampling kernel $g(t)$ and then sampled using an IF-TEM to generate time-instants $\{t_n\}$ from which the ECG signal is recovered.}
    \label{fig:TEMfri}
\end{figure}

\subsection{Problem Formulation}
As illustrated in Fig. \ref{fig:ECG_COMP}, an ECG signal is composed of a series of pulses, which aligns with the concept of a pulse train signal. Pulse train signals fall within the category of FRI signals.
This presents an opportunity to employ the principles of FRI sampling theory for sub-Nyquist sampling of ECG signals. Compliant with FRI theory, an ECG signal lends itself to representation through a weighted combination of pulse functions.
These functions directly correspond to the characteristic forms of the P, Q, R, S, and T waves embedded within the ECG waveform; each corresponds to a specific event within the cardiac cycle \cite{goldberger1981clinical}.
In Fig. \ref{fig:decompose}, we provide an illustration depicting the decomposition of an ECG signal using five VPW-FRI asymmetric pulses.

We consider an ECG signal which is presented by a $T$-periodic VPW-FRI signal of the form \cite{baechler2017sampling}:
\begin{equation}
    x(t) = \sum_{k=0}^{K-1}x_k(t), \quad\text{where,} \quad x_k(t) = x_k^s(t)+x_k^a(t),
    \label{eq:fri}
\end{equation}
\begin{equation}
    x_k^s(t) = c_k\sum_{p\in\mathbb{Z}}\frac{r_k}{\pi\left( r_k^2 +\left(t-T_k-pT\right)^2\right)},
\end{equation}
\begin{equation}
    x_k^a(t) = d_k\sum_{p\in\mathbb{Z}}\frac{t-T_k-pT}{\pi\left( r_k^2 +\left(t-T_k-pT\right)^2\right)}.
\end{equation}
% The signal components $x_k^s(t),x_k^a(t)$ are the symmetric and anti-symmetric parts of the pulse, respectively. Here, every VPW pulse $x_k(t)$ can be distinctly characterized using four specific parameters: a symmetrical amplitude denoted as $c_k$, an asymmetrical amplitude denoted as $d_k$, the width of the pulse represented by $r_k$, and a temporal delay denoted as $T_k$.
In this model, each VPW-FRI pulse $x_k(t)$ is characterized by four parameters: $c_k$ (symmetric amplitude), $d_k$ (asymmetric amplitude), $r_k$ (pulse width), and $T_k$ (temporal delay). The signal components $x_k^s(t)$ and $x_k^a(t)$ represent the symmetric and asymmetric parts of the pulse, respectively.
Theoretically, an ECG signal can be represented using the VPW-FRI model with $K=5$ pulses, corresponding to the five characteristic waveforms.
VPW-FRI pulses can be seen as an extension of the FRI model. By allowing the asymmetrical amplitude parameter $d_k$ to be zero and taking the limit of $x_k(t)$ as the pulse width $r_k$ approaches zero, the result is equivalent to a Dirac delta with an amplitude of $c_k$ at the time delay $T_k$.
% Note that VPW-FRI pulses can be seen as an extension of the FRI model. By allowing the asymmetrical amplitude parameter $d_k$ to be zero and taking the limit of $x_k(t)$ as the pulse width $r_k$ approaches zero, the result is equivalent to a Dirac delta with an amplitude of $c_k$ at the time delay $T_k$.
% Theoretically, it's possible to represent an ECG signal using \eqref{eq:fri} with a chosen value of $K=5$.

To sample $x(t)$ at a sub-Nyquist rate, we first pass it through a designed sampling kernel $g(t)$ and then measure low-rate samples of the filtered signal $y(t)$ using an IF-TEM ADC \cite{naaman2023hardware}. The sampling kernel $g(t)$ should be designed such that the VPW-FRI parameters $\{T_k, r_k, c_k, d_k\}_{k=0}^{K-1}$ can be accurately computed from the IF-TEM samples, as depicted in Fig. \ref{fig:Vetterli_TEM}.
% To sample $x(t)$ at a sub-Nyquist rate, we first pass it through a designed sampling kernel $g(t)$ and then measure low-rate samples of the filtered signal $y(t)$. The kernel should be designed such that the VPW-FRI parameters $\{T_k,r_k,c_k,d_k\}_{k=0}^{K-1}$ can be computed accurately from the IF-TEM samples (Fig. \ref{fig:Vetterli_TEM}).
In particular, it has been shown that using a classical clock-based uniform sampler, $4K$  samples of $x(t)$ in an interval $T$ are sufficient to determine $\{T_k,r_k,c_k,d_k\}_{k=0}^{K-1}$ uniquely \cite{baechler2017sampling}.
The problem at hand is the perfect recovery of the ECG signal's VPW-FRI parameters $\{T_k, r_k, c_k, d_k\}_{k=0}^{K-1}$ using an IF-TEM sampling scheme, as shown in Fig.~\ref{fig:TEMfri}. Specifically, we aim to design the sampling kernel $g(t)$ and the IF-TEM parameters ${b, \kappa, \delta}$ such that the VPW-FRI parameters are uniquely determined from the time-encodings obtained from the IF-TEM. Additionally, we develop a reconstruction algorithm capable of accurately recovering the ECG signal based on the time-encoded measurements.
% We consider the problem of perfect recovery of the ECG signal's parameters $\{T_k,r_k,c_k,d_k\}_{k=0}^{K-1}$ using an IF-TEM sampling scheme as shown in Fig.~\ref{fig:TEMfri}. Specifically, we consider designing the sampling kernel $g(t)$ and an IF-TEM with parameters $\{b,\kappa,\delta\}$ such that the VPW-FRI parameters are uniquely determined from the time-encodings.

%%%%%%%%%%%%%%%%%%%%%%%%%%%%%%%%%%%%%%%%%%%%%%%%%%%%%%%%%%%%%%%%%%%%%%%%%%%%%%%%
\section{ECG-TEM: sampling and perfect recovery of VPW-FRI Signals from IF-TEM measurements}
\label{sec:theory}
%%%%%%%%%%%%%%%%%%%%%%%%%%%%%%%%%%%%%%%%%%%%%%%%%%%%%%%%%%%%%%%%%%%%%%%%%%%%%%%%
% In this section, we show that VPW-FRI signals, which model the ECG signals, can be perfectly recovered from IF-TEM measurements. We use the fact that the ECG signal $x(t)$ in \eqref{eq:fri} can be perfectly reconstructed from its $4K$ Fourier series coefficients (FSCs) \cite{baechler2017sampling}. We derive conditions on the IF-TEM parameters and the sampling kernel $g(t)$ such that $4K$ FSCs of the input VPW-FRI signal can be uniquely recovered from the IF-TEM output. 
% In this section, we demonstrate that VPW-FRI signals, which model ECG signals, can be perfectly recovered from measurements obtained using the integrate-and-fire time-encoding machine (IF-TEM). We leverage the fact that the ECG signal $x(t)$ in \eqref{eq:fri} can be perfectly reconstructed from its $4K$ Fourier series coefficients (FSCs) \cite{baechler2017sampling}. We derive conditions on the IF-TEM parameters and the sampling kernel $g(t)$ such that the $4K$ FSCs of the input VPW-FRI signal can be uniquely recovered from the IF-TEM output.
In this section, we introduce a method to perfectly recover VPW-FRI signals, which model ECG signals, from measurements obtained using the IF-TEM. We leverage the fact that the ECG signal $x(t)$ in \eqref{eq:fri} can be perfectly reconstructed from its $4K$ Fourier series coefficients (FSCs) \cite{baechler2017sampling}. We derive conditions on the IF-TEM parameters and the sampling kernel $g(t)$ such that the $4K$ FSCs of the input VPW-FRI signal are uniquely recovered from the IF-TEM output.
\vspace{-0.62mm}
\subsection{Fourier-Series Representation of VPW-FRI Signals}
\label{sub:Fourier}
We begin by explicitly relating the ECG input signal $x(t)$ of \eqref{eq:fri} to its FSCs following \cite{baechler2017sampling}.

Given that $x(t) = \sum_{k=0}^{K-1}x_k(t)$ in \eqref{eq:fri} forms a signal with a periodicity of $T$ (each $x_k(t)$ has period $T$), it has a Fourier series representation
\begin{equation}
    x(t) = \sum_{m\in\mathbb{Z}}X[m]e^{jk\omega_0 t},
    \label{eq:x_initial}
\end{equation}
where $\omega_0=\frac{2\pi}{T}$. The Fourier-series coefficients $X[m]$ are given by 
\begin{equation} \label{eq:x_fourier}
\begin{split}
X[m]&= \sum_{k=1}^K X_k^s[m]+X_k^a[m]\\
&=\sum_{k=1}^K\frac{c_k-jd_k sgn(m)}{T}e^{-2\pi\left(r_k|m|+jT_km\right)/T},
\end{split}
\end{equation}
where, 
\begin{equation}
    X_k^s[m] = \frac{c_k}{T}e^{-2\pi\left(r_k|m|+jT_km\right)/T},
\end{equation}
represents the symmetric component, and
\begin{equation}
    X_k^a[m] = -\frac{jd_k}{T}sgn(m)e^{-2\pi\left(r_k|m|+jT_km\right)/T},
\end{equation}
represents the anti-symmetric component of the FSCs.
Since $x(t)$ is real-valued, its FSCs $X[m]$ are complex conjugate pairs, that is,
\begin{align}
X^*[-m] = X[m].
\label{eq:complex_conj}
\end{align}
% According to \cite{baechler2017sampling,huang2022sub}, only the positive $m$ indices are considered.
% This is because the VPW spectrum contains a cusp at $m=0$, which precludes annihilating both positive and negative values of the spectrum due to its decaying nature.
According to \cite{baechler2017sampling,huang2022sub}, only the positive indices $m \geq 0$ are considered for the VPW-FRI spectrum. This restriction arises due to the presence of a cusp at $m=0$, which prevents the annihilation of both positive and negative spectrum values owing to the decaying nature of the VPW spectrum.

% The sequence in \eqref{eq:x_fourier} is a spectral estimation problem,
% and the parameters $\{u_k, v_k\}_{k=1}^K$ can be estimated uniquely
% using the theory of high-resolution spectral estimation \cite{stoica}. For example, one can apply the well-known annihilating filter method \cite{vetterli2002sampling} to compute $\{u_k, v_k\}_{k=1}^K$. We need to compute $2K$ consecutive values of $X[m]$ to determine the parameters. Then,
% these parameters are used to reconstruct the ECG signal. For VPW-FRI scheme, when the number K of VPW pulses is known, the $4K$ unknown parameters of the ECG signals in \eqref{eq:fri}, i.e., $\{c_k,d_k,r_k,T_k\}_{k=1}^K$ can be estimated using the parameter
% estimation algorithm according to the FSCs of the ECG signal. 
The sequence outlined in \eqref{eq:x_fourier} represents a spectral estimation problem. The parameters $\{u_k, v_k\}_{k=1}^K$ can be uniquely estimated employing high-resolution spectral estimation theory \cite{stoica}. A well-established technique like the annihilating filter (AF) method \cite{vetterli2002sampling} can be utilized to compute $\{u_k, v_k\}_{k=1}^K$. A series of $2K$ consecutive values of $X[m]$ needs to be computed to determine these parameters. 

In the context of the VPW-FRI framework, if the number $K$ of VPW-FRI pulses is known, the $4K$ unknown parameters in \eqref{eq:fri}, denoted as $\{c_k, d_k, r_k, T_k\}_{k=1}^K$, can be estimated through a parameter estimation algorithm that aligns with the FSCs of the ECG signal.
Consequently, our task is reduced to the distinct determination of the desired number of FSCs from the signal measurements. Given that $x(t)$ typically comprises a substantial number of FSCs, we discuss next a sampling kernel design that removes unnecessary FSCs and thus reduces the sampling rate.

\subsection{Sampling Kernel}
\label{sub:Kernel}

Since a minimum of $4K$ FSCs are sufficient for uniquely recovering the ECG signal, the sampling kernel $g(t)$ is designed to remove or annihilate any additional FSCs. The filtered signal $y(t)$ is given by \cite{naaman2021fri}
\begin{equation}
\begin{split}
    y(t) &= (x * g)(t)= \int_{-\infty}^{\infty} x(\tau)g(t-\tau)d\tau\\
    &=\sum_{m\in \mathbb{Z}}
    X[m]\int_{-\infty}^{\infty} g(t-\tau)e^{jm\omega_0 \tau}d\tau\\
    &= \sum_{m\in \mathbb{Z}}
   X[m]\, \hat{g}(m\omega_o) \,e^{jm\omega_0 t}.\\
    \end{split}
     \label{eq:yt_by_x}
\end{equation}

Following the approach proposed in \cite{naaman2021fri,naaman2023hardware}, we define the sampling kernel $g(t)$ to exclude the zeroth Fourier coefficient (DC component) of the filtered signal $y(t)$, leading to a robust reconstruction process. The kernel is designed to satisfy the following condition in the Fourier domain:

\begin{equation}
\hat{g}(m\omega_0) =
\begin{cases}
1 & \text{if $m\in \mathcal{M}$},\\
0 & \text{otherwise},
\end{cases}
\label{eq:kernel}
\end{equation}
where
\begin{equation}
    \mathcal{M} = \{-M,\cdots,-1,1,\cdots,M\}.
    \label{eq:M}
\end{equation}
Here, $\mathcal{M}$ is a set of integers such that $|\mathcal{M}|\geq 4K$, ensuring that at least $4K$ FSCs are preserved for unique recovery of the VPW-FRI parameters. Sampling kernels that fulfill the criteria outlined in \eqref{eq:kernel} include the sinc function as discussed in \cite{vetterli2002sampling}, exponential and polynomial reproducing kernels as explored in \cite{dragotti2007sampling},  sum-of-modulated spline kernels as described in \cite{mulleti2017paley}, 
sum-of-sincs (SoS) kernel as presented in \cite{tur2011innovation} and more.

Note that an ideal lowpass filter with an appropriate cutoff frequency can also be applied to remove the FSCs. Following the kernel design in \eqref{eq:kernel}, the filtered signal $y(t)$ takes the form:
\begin{equation}
y(t) =\sum_{m\in \mathcal{M}} X[m]\hat{g}(m\omega_0) e^{jm\omega_0 t}=\sum_{m\in \mathcal{M}} X[m] e^{jm\omega_0 t}.
\label{eq:yt_by_x22_M}
\end{equation}
The filtered signal $y(t)$ is sampled by an IF-TEM, which requires its input to be real-valued and bounded. The boundedness of $y(t)$ is established in Appendix \ref{app:a}.
%%%%%%%%%%%%%%%%%%%%%%%%%%%%%%%%%%%%%%%%%%
% {{\begin{figure*}[!h]
%  \begin{align}
%   \mathbf{A}=    \scriptsize{ \begin{bmatrix}
%  e^{-jM\omega_0t_2}-e^{-jM\omega_0t_1}& \cdots & e^{-j\omega_0t_2}-e^{-j\omega_0t_1}&e^{j\omega_0t_2}-e^{j\omega_0t_1}  & \cdots & e^{jM\omega_0t_2}-e^{jM\omega_0t_1}\\       
%  e^{-jM\omega_0t_3}-e^{-jM\omega_0t_2}& \cdots & e^{-j\omega_0t_3}-e^{-j\omega_0t_2}&e^{j\omega_0t_3}-e^{j\omega_0t_2}  & \cdots & e^{jM\omega_0t_3}-e^{jM\omega_0t_2}\\ 
%  \vdots&  & \vdots & \vdots&  & \vdots\\
% e^{-jM\omega_0t_N}-e^{-jM\omega_0t_{N-1}}& \cdots &e^{-j\omega_0t_N}-e^{-j\omega_0t_{N-1}}&e^{j\omega_0t_N}-e^{j\omega_0t_{N-1}}  & \cdots & e^{jM\omega_0t_N}-e^{jM\omega_0t_{N-1}}\\  
% \end{bmatrix}}.
% \label{eq:Amat}
%   \end{align}
% \end{figure*}}}

%%%%%%%%%%%%%%%%%%%%%%%%%%%%%%%%%%%%%%%%%%%%%
%%%%%%%%%%%%%%%%%%%%%%%%%%%%%%%%%%%%%%%%%%%%%%%%%%%%%%%%%%%%%%%%%%%%%%%%%%%%%%%%
% \vspace{-1cm}
\subsection{ECG-TEM Sampling and Recovery Guarantees}
\label{sub:Samp}
% \vspace{-0.25mm}
The IF-TEM input is the filtered signal $y(t)$, which is the $T$-periodic signal defined in \eqref{eq:yt_by_x22_M}.
The output of the IF-TEM is a set of time instants $\{ t_n\}_{n\in\mathbb{Z}}$. Given $\{t_n\}$ one can determine the measurements $\{y_n\}$ by using \eqref{eq:trigger0}.
The relation between the measurements $y_n$ and the desired FSCs is given by 
\begin{equation}
    \begin{split}
        y_n &= \int_{t_n}^{t_{n+1}}y(t)\, dt\\ &= \int_{t_n}^{t_{n+1}} \sum_{m\in \mathcal{M}}
   X[m]e^{jk\omega_0 t}dt\\
    &= \sum_{m\in \mathcal{M} } X[m] \frac{\left( e^{jm\omega_0 t_{n+1}}- e^{jm\omega_0 t_{n}} \right)}{jm\omega_0}.
    \end{split}
    \label{eq:yx_rel1}
\end{equation}

Partial summation of IF-TEM measurements is utilized to enhance the robustness of signal reconstruction, as shown by \cite{naaman2023hardware}, which proves more effective than directly using the raw IF-TEM measurements.
The partial sums of the measurements $y_n$ are defined as
\begin{equation}
    z_n=\sum_{i=1}^{n-1} y_i = \sum_{m\in \mathcal{M} } \frac{X[m]}{jm\omega_0} \left( e^{jm\omega_0 t_{n}}- e^{jm\omega_0 t_{1}} \right),
    \label{eq:partial}
\end{equation}
where $ n = 2,\cdots,N$.
Note that \eqref{eq:partial} can be alternatively expressed as
\begin{equation}
    z_n= \sum_{m\in \mathcal{M} } \frac{X[m]}{jm\omega_0}  e^{jm\omega_0 t_{n}}+ c,
    \label{eq:partial2}
\end{equation}
where 
\begin{equation}
     c = - \sum_{m\in \mathcal{M} } \frac{X[m]}{jm\omega_0}  e^{jm\omega_0 t_{1}}.
\end{equation}
Let $\mathbf{z} = [z_2,\cdots,z_N]^\mathrm{T}\in\mathbb{R}^{N-1}$ denote the the vector of partial sums, and let $\mathbf{\hat{z}}$ be the vector of FSCs, with $c$ in the zeroth place:
\begin{equation}
    \mathbf{\hat{z}} =
\left[-\frac{X[-M]}{jM\omega_0}, \cdots, -\frac{X[-1]}{j\omega_0},\kern 0.16667em c \kern 0.16667em,\frac{X[1]}{j\omega_0},\cdots, \frac{X[M]}{jM\omega_0}\right]^{\top}.
\label{eq:z_hat}
\end{equation}
Define  $\mathbf{B}\in\mathbb{C}^{(N-1) \times (2M+1)}$ as the matrix
\begin{equation}
    \mathbf{B} = {\begin{bmatrix}e^{-jM\omega_0 t_2}& \cdots1\cdots&{e^{jM\omega_0 t_2}}\\
     e^{-jM\omega_0 t_3}&\cdots1\cdots&{e^{jM\omega_0 t_3}}\\ 
     \vdots &\ddots&\vdots \\ 
     e^{-jM\omega_0 t_N}& \cdots1\cdots&{e^{jM\omega_0 t_N}}
    \end{bmatrix}}.
    \label{eq:matrixB}
\end{equation}
Then, \eqref{eq:partial2} can be expressed in matrix form as follows:
\begin{equation}
     \mathbf{z} = \mathbf{B}\, \mathbf{\hat{z}}.
     \label{eq:forward_model}
\end{equation}

% The vector of FSCs $\mathbf{\hat{z}}$ and the vector of FSCs $\mathbf{\hat{x}}$ are related by:
% \begin{equation}
%     \mathbf{\hat{x}} = \left[\hat{z}[-M],\cdots,\hat{z}[-1],\hat{z}[1],\cdots,\hat{z}[M]\right]^{\top} \in\mathbb{C}^{2M}.
%     \label{eq:zandx0}
% \end{equation}
% This equation allows us to obtain $\mathbf{\hat{x}}$ by selecting the appropriate elements of $\mathbf{\hat{z}}$, which is the vector obtained from the partial sums of the measurements. 
% In the following, we establish that for a sufficient number of firing instances, the matrix $\mathbf{B}$ is indeed uniquely left-invertible, that is, has full column rank.

%%%%%%%%%%%%%%%%%%%%%%%%%%%%%%%%%%%%%%%%%%%%%%%%%%%%
In \cite{naaman2023hardware}, it is established that, given the set of distinct time instants $\{t_n\}_{n=2}^N$ and the Vandermonde structure of the matrix $\mathbf{B}$, the condition $N-1 \geq 2M+1$ ensures that $\mathbf{B}$ has full column rank. Consequently, $\mathbf{B}$ is left-invertible, enabling perfect reconstruction of the FSCs vector $\mathbf{\hat{z}}$ via
\begin{equation}
    \mathbf{\hat{z}} = \mathbf{B}^\dagger\,\mathbf{z},
    \label{eq:fourierZ}
\end{equation}
where $\mathbf{B}^\dagger=\left(\mathbf{B}^\top\mathbf{B}\right)^{-1}\mathbf{B}^\top$ denotes the Moore-Penrose inverse of $\mathbf{B}$. Once $\mathbf{\hat{z}}$ is obtained, the FSCs $\hat{x}[k]$ are uniquely determined.  Using the relation
\begin{align}
    \hat{z}[m]= \begin{cases}
     \frac{X[m]}{j\omega_0m}, & \text{if $m\in\mathcal{M}$ },\\
      -\sum_{m^\prime\in \mathcal{M}}\left( \frac{X[m^\prime]}{jm^\prime\omega_0} \right)  e^{jkm^\prime\omega_0 t_1} & \text{if $m=0$ },
      \end{cases}
      \vspace{.1in}
      \label{eq:get_x}
\end{align}
we have the vector of FSCs $\mathbf{\hat{x}}$: 
\begin{equation}
    \mathbf{\hat{x}} = \left[\hat{z}[-M],\cdots,\hat{z}[-1],\hat{z}[1],\cdots,\hat{z}[M]\right]^{\top} \in\mathbb{C}^{2M}.
    \label{eq:zandx0}
\end{equation}
By employing the vector $\mathbf{\hat{z}}$ and the relation in \eqref{eq:zandx0}, the vector of FSCs $\mathbf{\hat{x}}$ is uniquely determined. This indicates that, within the modified kernel configuration and in the absence of the zero frequency, the set of FSCs ${\hat{x}[k]}$ can be uniquely determined from the time encodings if $N-1 \geq 2M+1$. This requirement implies a minimum of $2M+2$ firing instants within the interval $T$.
We next show that for the sampling kernel choice \eqref{eq:kernel}, we can uniquely identify an ECG signal described as a VPW-FRI signal, from the IF-TEM time instances. Our results are summarized in the following theorem.
\begin{theorem}
\label{theorem:FRI0}
Let $x(t)$ be an ECG signal, described by a $T$-periodic VPW-FRI model of the  form $ x(t) = \sum_{k=0}^{K-1}x_k(t)$,
% \begin{align*}
%   ,
% \end{align*}
as defined in \eqref{eq:fri}. Consider the sampling mechanism shown in Fig.~\ref{fig:TEMfri}. Let the sampling kernel $g(t)$ satisfy
\begin{align*}
    \hat{g}(m\omega_0) =       
    \begin{cases}
      1 & \text{if $m\in \mathcal{M}=\{-M,\cdots,-1,1,\cdots,M \}$},\\
      0 & \text{otherwise}.
    \end{cases}
    \end{align*}
Choose the real positive TEM parameters $\{b, \kappa, \delta\}$ such that $c<b<\infty$, and
\begin{equation}   
\frac{b-c}{\kappa\delta} \geq \frac{8K+2}{T}. \label{eq:sample_bound0}
\end{equation} 
Then, the ECG parameters $\{T_k,r_k,c_k,d_k\}_{k=0}^{K-1}$ can be perfectly recovered from the IF-TEM outputs if $M \geq 4K$.
\end{theorem}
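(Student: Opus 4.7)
The plan is to chain together three ingredients already assembled in the preceding subsections. First, the kernel condition \eqref{eq:kernel} reduces the IF-TEM input to a bounded, real, band-limited trigonometric polynomial carrying exactly $2M$ unknown Fourier-series coefficients $\{X[m]\}_{m\in\mathcal{M}}$. Second, the partial-sum identity \eqref{eq:partial2} expresses these coefficients through the linear system $\mathbf{z} = \mathbf{B}\hat{\mathbf{z}}$; once $\mathbf{B}$ is shown to be left-invertible, $\hat{\mathbf{z}}$ is uniquely recovered and so are the FSCs via \eqref{eq:get_x}--\eqref{eq:zandx0}. Third, given at least $2M\geq 8K$ consecutive FSCs of $x(t)$, the spectral-estimation / annihilating-filter analysis of \cite{baechler2017sampling} identifies the $4K$ VPW-FRI parameters $\{T_k,r_k,c_k,d_k\}_{k=0}^{K-1}$.

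I would begin by verifying that $y(t)$ is admissible as an IF-TEM input. The Hermitian symmetry \eqref{eq:complex_conj} makes $y(t)$ real, $T$-periodicity is immediate from \eqref{eq:yt_by_x22_M}, and boundedness $|y(t)|\leq c$ is the content of Appendix \ref{app:a}. The hypothesis $c<b<\infty$ therefore guarantees that $y(t)+b$ is strictly positive, so both the IF-TEM dynamics \eqref{eq:y_t_relation_st} and the inter-firing bounds \eqref{eq:consecutive_time} hold.

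Next I would convert the rate hypothesis \eqref{eq:sample_bound0} into a firing-count guarantee per period. From $t_{n+1}-t_n\leq \kappa\delta/(b-c)$, any interval of length $T$ contains at least
\begin{equation*}
N\;\geq\;\frac{T(b-c)}{\kappa\delta}\;\geq\;8K+2
\end{equation*}
firings. Taking $M=4K$ (the critical case allowed by $M\geq 4K$), this yields $N-1\geq 2M+1$, which is exactly the row-count condition identified in \cite{naaman2023hardware} that forces the Vandermonde-structured matrix $\mathbf{B}$ in \eqref{eq:matrixB} to have full column rank; the lower spacing in \eqref{eq:consecutive_time} simultaneously guarantees that the $t_n$ are pairwise distinct within one period, so the nodes $\{e^{j\omega_0 t_n}\}$ are distinct. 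Hence $\mathbf{B}^{\dagger}$ exists and \eqref{eq:fourierZ} recovers $\hat{\mathbf{z}}$, and thus $\hat{\mathbf{x}}$, uniquely.

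Finally, the $2M\geq 8K$ consecutive FSCs $\hat{\mathbf{x}}$ feed into the annihilating-filter argument applied to the positive-index half of \eqref{eq:x_fourier}, identifying the $2K$ complex exponentials $e^{-2\pi(r_k+jT_k)/T}$ together with their symmetric and asymmetric amplitudes $c_k,d_k$, which completes the recovery of the VPW-FRI parameter set. The main technical hinge is the rank step: tying the firing-rate hypothesis precisely to the inequality $N-1\geq 2M+1$ is what pins down the constant $8K+2$ in \eqref{eq:sample_bound0}, and the generalized Vandermonde full-rank property of $\mathbf{B}$ requires strictly separated firings, a fact furnished by the lower half of \eqref{eq:consecutive_time}. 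The remaining pieces are bookkeeping on top of already-cited lemmas.
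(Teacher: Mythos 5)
Your proposal is correct and follows essentially the same route as the paper's proof: establish that the filtered signal $y(t)$ is a real, bounded, $T$-periodic trigonometric polynomial, use the firing-rate condition \eqref{eq:sample_bound0} to guarantee at least $8K+2$ firings per period so that the Vandermonde-structured system \eqref{eq:forward_model} is left-invertible and the FSCs are recovered via \eqref{eq:fourierZ}--\eqref{eq:zandx0}, and then apply the annihilating-filter/Prony step to \eqref{eq:x_fourier2} to extract $\{T_k,r_k,c_k,d_k\}$. Your added bookkeeping (admissibility of $y(t)$, converting the rate bound into a firing count via \eqref{eq:consecutive_time}, distinctness of the $t_n$) only makes explicit steps the paper leaves implicit.
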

%%%%%%%%%%%%%%%%%%%%%%%%%%%%%%%%%%%%%%%%%%%%%%%%%%%%%%%%%%%%%%%
\begin{proof}
Consider a positive integer $K$ and a number $T>0$. Let $0\leq t_1<t_2<\cdots<t_N < T$ for an integer $N$, and $\omega_0 = \frac{2\pi}{T}$. 
The IF-TEM input is the filtered signal $y(t)$, which is the $T$-periodic signal defined in \eqref{eq:yt_by_x22_M}.
The output of the IF-TEM is a set of time instances $\{ t_n\}_{n\in\mathbb{Z}}$. Given $\{t_n\}$ one can determine the measurements $\{y_n\}$ by using \eqref{eq:trigger0}.
The relation between the measurements $y_n$, their partial sum, and the desired FSCs are defined in \eqref{eq:yx_rel1} and \eqref{eq:partial}, respectively. 
The Fourier coefficients $\{X[m]\}_{m\in \mathcal{M}}$ are uniquely determined using \eqref{eq:fourierZ} and \eqref{eq:zandx0}, provided that there are at least $N \geq 2M+2$ time instances $\{t_n\}_{n=1}^N$ in an interval $T$, where $M$ should be at least the number of degrees of freedom of the signal $x(t)$. 
This implies that $M\geq 4K$ and there should be a minimum of $8K+2$ IF-TEM time instances within an interval of $T$ to enable recovery of the FSCs and subsequent reconstruction of the VPW-FRI signal. To ensure this, the IF-TEM parameters are chosen such that 
\begin{align}
    \frac{b-c}{\kappa\delta} \geq \frac{8K+2}{T}.
    \label{eq:minFR}
\end{align}
% {\color{red}{After computing the FSCs, the VPW-FRI reconstruction described in \cite{baechler2017sampling} is used to recover the ECG signal $x(t)$ (see Algorithm 1).}}

After computing the FSCs ${X[m]}_{m\in \mathcal{M}}$ from \eqref{eq:zandx0}, using \eqref{eq:x_fourier}, we have:
\begin{equation} \label{eq:x_fourier2}
\begin{split}
X[m] &=\sum_{k=1}^K \frac{c_k-jd_k}{T}e^{-2\pi\left((r_k+jT_k)m\right)/T}\\
&= \sum_{k=1}^K v_k u_k^m\quad m\in\mathcal{M},
\end{split}
\end{equation}
where $v_k=\frac{c_k-jd_k}{T}$ and $u_k=e^{\left(-2\pi (r_k+j T_k)\right)/T}$.
The reconstruction of the parameters $\{c_k, d_k, r_k, T_k\}_{k=1}^K$ becomes a spectrum estimation problem. In the absence of noise, Prony's method can be employed for perfect parameter estimation when $M\geq 4K$. Prony's method involves constructing the unique annihilating filter for the FSCs.

The unique annihilating filter $A(z)$ for the revised FSCs $X[m]$ in \eqref{eq:x_fourier} is given by:
\begin{equation}\label{eq:ann_uniq}
A(z) = \sum_{k=0}^{K} A[k]z^{-k} = \prod_{k=0}^{K-1} (1-u_kz^{-1}).
\end{equation}
The convolution of $A(z)$ with $X[m]$ satisfies:
\begin{equation}\label{eq:annhilating}
\begin{split}
(A*X)[m] &= \sum_{l=0}^{K} X[m-l]\\
&= \sum_{k=0}^{K-1} (c_k-jd_k)\left(\sum_{l=0}^{K} A[l]u_k^{-l} \right)u_k^m = 0.
\end{split}
\end{equation}
Given the roots $\{u_k\}_{k=0}^{K-1}$ of the annihilating filter $A(z)$, the delays $T_k$ can be computed using $T_k=-\frac{T\angle u_k}{2\pi}$, and the widths $r_k$ can be computed as $r_k=\frac{T\log |u_k|}{2\pi}$.
Finally, the parameters $c_k$ and $d_k$ are then retrieved by solving \eqref{eq:x_fourier2}, with $c_k = T \operatorname{Re}(v_k)$ and $d_k = T \operatorname{Im}(v_k)$, completing the proof.
\end{proof}
Based on Theorem~\ref{theorem:FRI0}, a reconstruction algorithm to compute the VPW-FRI parameters from IF-TEM firings is presented in Algorithm~\ref{alg:algorithm_th1}.
\begin{algorithm}[!h]
\caption{Reconstruction of an ECG signal using an IF-TEM sampler}
\label{alg:algorithm_th1}
\begin{algorithmic}[1]
\Statex \textbf{Input:} $N\geq 8K+2$ spike times $\{t_n\}_{n=1}^N$ in a period $T$.
\State $n \gets 1$
\While{$n\leq N-1$}
\State Compute $y_n = -b(t{n+1}-t_n)+\kappa\delta$
\State $n\gets n+1$
\EndWhile
\State Compute $\mathbf{z} = [z_2,\cdots,z_N]^{\mathrm{T}}\in\mathbb{R}^{N-1}$ using \eqref{eq:partial}
\State Compute $\mathbf{B}$ using \eqref{eq:matrixB}
\State Compute the Fourier coefficients vector $\mathbf{\hat{z}} = \mathbf{B}^\dagger\mathbf{z}$
\State Compute the Fourier coefficients vector $\mathbf{\hat{x}} = [\hat{z}[-M],\cdots,\hat{z}[-1],\hat{z}[1],\cdots,\hat{z}[M]]^{\top} \in\mathbb{C}^{2M}$ using \eqref{eq:zandx0}
\State Construct the unique annihilating filter $A$ for the FSCs $X[m],\quad m\in\mathcal{M}$ using \eqref{eq:ann_uniq}
\State Denoise the Fourier coefficients vector $\mathbf{\hat{x}}$ (see Section \ref{sec:denoise})
\State Compute the roots $u_k=e^{\left(-2\pi (r_k+j T_k)\right)/T}$ of the annihilating filter $A(z)$ using \eqref{eq:annhilating}
\State Compute $T_k=-\frac{T\angle u_k}{2\pi}$
\State Compute $r_k=\frac{T\log |u_k|}{2\pi}$
\State Compute $v_k=\frac{c_k-jd_k}{T}$ by solving linearly \eqref{eq:x_fourier2} with $u_k$
\State Compute the amplitudes $c_k = T \operatorname{Re}(v_k)$
\State Compute the amplitudes $d_k = T \operatorname{Im}(v_k)$
\State \textbf{return} $\{T_k,r_k,c_k,d_k\}_{k=0}^{K-1}$
\Statex \textbf{Output:} $\{T_k,r_k,c_k,d_k\}_{k=0}^{K-1}$
\end{algorithmic}
\end{algorithm}

\subsection{IF-TEM Parameter Selection}
\label{subsec:params}
The IF-TEM parameters are selected such that there is a minimum of $N\geq 2M+2$ time instants $\{t_n\}_{n=1}^N$ within a time interval $T$, where $M\geq 4K$.
Thus, the minimum firing rate that enables accurate reconstruction is $\frac{8K+2}{T}$. The maximum firing rate is bounded by $\frac{b+c}{\kappa\delta}$.
While the threshold $\delta$, which is a parameter of the comparator, is easier to control, the integrator constant $\kappa$ is a parameter of the integrator, and it is usually fixed. Thus, assuming a fixed value of $b$ and $\kappa$, choosing small $\delta$ results in a large firing rate above the minimum desirable value of $\frac{8K+2}{T}$.
% Thus, to strike a balance, a possible solution is to choose the parameters as follows: $b - c = \epsilon >0, \delta =\epsilon$ and $ \kappa = \frac{T}{2L+2}$.
In practice, both $b$ and $\delta$ are generated through a DC voltage source, and therefore large values of bias and threshold require high power. Hence, to minimize the power requirements, it is desirable for $b$ and $\delta$ to be as small as possible.

\subsection{Denoisers}
\label{sec:denoise}
In practical scenarios, the acquired signals are often corrupted by noise during the measurement process. Even if the reconstruction model is accurate, the acquisition devices themselves introduce noise, degrading signal quality. This highlights the importance of incorporating a denoising step to mitigate the effects of noise on the FSCs before estimating the ECG signal parameters. We assume an additive white Gaussian noise (AWGN) model with zero mean and independent and identically distributed (i.i.d.) samples. Moreover, we consider the noise to be introduced after the IF-TEM sampler acquisition stage.

\begin{figure}[t]
\centering
\includegraphics[width=0.5\textwidth]{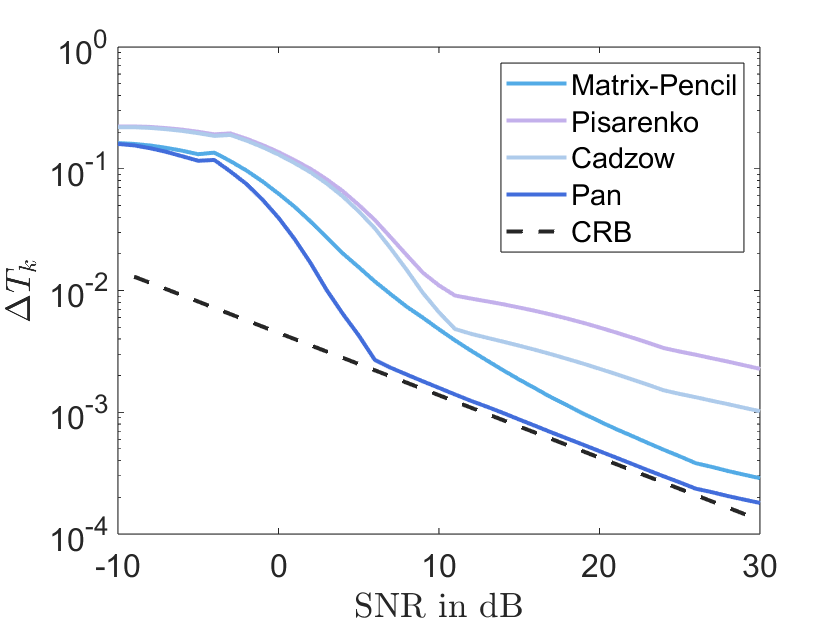}
\caption{Comparison of denoising approaches for estimating the location of VPW-FRI pulses. The estimation procedure utilizes 10 Fourier coefficients, with results averaged over 5,000 repetitions. For the single VPW-FRI pulse case, the Cramér-Rao bound on location estimation variance is plotted (dashed line).}
\label{fig:denoisers}
\end{figure}

Assume that AWGN is present. The Fourier series coefficients $X[m]$ can then be expressed as:
\begin{equation}
\tilde{X}[m] = X[m] + \zeta[m],
\end{equation}
where $\zeta[m]$ represents the FSCs of the AWGN.
Thus, the FSCs $X[m]$ contain a noise component, resulting in the annihilating filter condition in \eqref{eq:annhilating} no longer being strictly satisfied:
\begin{equation}\label{eq:annhilating2}
(A * \tilde{\mathbf{x}})[m] \approx 0,
\end{equation}
where $A(z)$ is the annihilating filter for the noise-free FSCs $\mathbf{\hat{x}}$.

To minimize the approximation error in \eqref{eq:annhilating2}, a higher sampling rate is necessary to obtain more Fourier coefficients and construct a larger Toeplitz matrix for estimating the parameters $u_k$ \cite{blu2008sparse}. Additionally, applying denoising techniques to the observed FSCs $\tilde{\mathbf{x}}$ can improve the accuracy of the reconstruction by reducing noise interference.

% Several denoising techniques have been considered for this task, as illustrated in Fig. \ref{fig:denoisers}. One frequently used method is the Cadzow denoiser \cite{cadzow1988signal}. The annihilating filter condition in \eqref{eq:annhilating} can be written in matrix form as:
% \begin{equation}
% \mathbf{S}\mathbf{h} = \mathbf{0},
% \end{equation}
% where $\mathbf{S}$ is a rank-deficient Toeplitz matrix formed by consecutive values of $\tilde{\mathbf{x}}$, and $\mathbf{h}$ is a vector containing the $K+1$ annihilating filter coefficients. The Cadzow algorithm is an iterative method that alternates between enforcing a rank of $K$ (the number of VPW-FRI pulses) and a Toeplitz structure for $\tilde{\mathbf{S}}$, where $\tilde{\mathbf{S}}$ is the noisy Toeplitz matrix. The low-rank approximation is realized through the singular value decomposition (SVD) of $\tilde{\mathbf{S}}$, retaining only its $K$ largest singular values. To maintain the Toeplitz nature of $\tilde{\mathbf{S}}$, diagonal averaging is employed.
Several denoising techniques have been considered for the task of recovering the parameters of the VPW-FRI pulses from noisy measurements, as illustrated in Fig. \ref{fig:denoisers}. One widely used method is the Cadzow denoiser \cite{cadzow1988signal}. The annihilating filter condition in \eqref{eq:annhilating} can be expressed in matrix form as:
\begin{equation}
\mathbf{S}\mathbf{h} = \mathbf{0},
\end{equation}
where $\mathbf{S}$ is a rank-deficient Toeplitz matrix formed from consecutive values of the noisy DFT coefficients $\tilde{\mathbf{x}}$, and $\mathbf{h}$ is a vector containing the $(K+1)$ annihilating filter coefficients. The Cadzow algorithm is an iterative method that alternates between enforcing a rank of $K$ (the number of VPW-FRI pulses) and a Toeplitz structure on the noisy matrix $\tilde{\mathbf{S}}$. The low-rank approximation is achieved through the singular value decomposition (SVD) of $\tilde{\mathbf{S}}$, retaining only its $K$ largest singular values. To maintain the Toeplitz structure, diagonal averaging is employed.

% Another technique is the matrix pencil method \cite{hua1990matrix}, also known as ESPRIT \cite{roy1989esprit}, which leverages the rotational invariance of the signal subspace and is non-iterative. Due to its non-iterative nature, the matrix pencil method can be employed sequentially after Cadzow denoising.
% Additionally, Pisarenko's method \cite{pisarenko1973retrieval} is considered. This technique estimates the annihilating filter by extracting the last column of the matrix $\mathbf{V}$ obtained from the SVD of the noisy Toeplitz matrix $\tilde{\mathbf{S}} = \mathbf{U}\boldsymbol{\Lambda}\mathbf{V}^*$. The last columns of $\mathbf{V}$ form an orthogonal basis for the nullspace of $\tilde{\mathbf{S}}$, and in this case, the nullspace of the original Toeplitz matrix $\mathbf{S}$ is one-dimensional. 
% Pisarenko's method, or the matrix pencil method, can be employed sequentially after applying the Cadzow denoiser.
Another technique is the matrix pencil method \cite{hua1990matrix}, also known as ESPRIT \cite{roy1989esprit}, which exploits the rotational invariance of the signal subspace and is non-iterative. Due to its non-iterative nature, the matrix pencil method can be applied sequentially after Cadzow denoising. Additionally, Pisarenko's method \cite{pisarenko1973retrieval} is considered. This technique estimates the annihilating filter by extracting the last column of the matrix $\mathbf{V}$ obtained from the SVD of the noisy Toeplitz matrix $\tilde{\mathbf{S}} = \mathbf{U}\boldsymbol{\Lambda}\mathbf{V}^*$. The last columns of $\mathbf{V}$ form an orthogonal basis for the nullspace of $\tilde{\mathbf{S}}$, and in this case, the nullspace of the original Toeplitz matrix $\mathbf{S}$ is one-dimensional. Pisarenko's method or the matrix pencil method can be employed sequentially after applying the Cadzow denoiser.

% The final method examined is the Pan denoiser, inspired by IQML \cite{pan2016towards}. The annihilating filter $\mathbf{h}$ is derived iteratively through the following minimization problem:
% \begin{equation}
% \underset{\hat{\mathbf{x}},\mathbf{h}}{\text{min}} \left| \tilde{\mathbf{f}} - \mathbf{L}\hat{\mathbf{x}} \right|_2^2,
% \end{equation}
% subject to:
% \begin{equation}
% \mathbf{h} * \hat{\mathbf{x}} = \mathbf{0},
% \end{equation}
% where $\mathbf{L}$ represents a linear transformation that maps the annihilatable signal to the measurements, and $\mathbf{h}$ is a vector containing the $K+1$ annihilating filter coefficients. 
% In the case of VPW-FRI, B is the transform that maps the denoised DFT coefficients $\hat{f}$
% (corresponding to positive frequencies) to the discrete-time (noisy) measured signal $\tilde{f}$.
% A distinctive feature of this method is its direct application to the measured signal, rather than $\tilde{\mathbf{S}}$, which can yield substantial performance improvements, especially when the transformation from signal measurements to the annihilated signal considerably modifies the noise structure.
The final method examined is the Pan denoiser, inspired by IQML \cite{pan2016towards}. The annihilating filter $\mathbf{h}$ is derived iteratively by solving the following minimization problem:
\begin{equation}
\underset{\hat{\mathbf{x}},\mathbf{h}}{\text{min}} \left| \tilde{\mathbf{f}} - \mathbf{L}\hat{\mathbf{x}} \right|_2^2,
\end{equation}
subject to $\mathbf{h}^* \hat{\mathbf{x}} = \mathbf{0}$, where $\mathbf{L}$ is a linear transformation that maps the annihilatable signal to the measurements, and $\mathbf{h}$ contains the $(K+1)$ annihilating filter coefficients.  In the case of VPW-FRI with uniform sampling, $\mathbf{L}$ transforms the denoised DFT (Discrete Fourier Transform) coefficients $\hat{\mathbf{x}}$ (corresponding to positive frequencies) to the discrete-time noisy measured signal $\tilde{\mathbf{f}}$.
% When employing an IF-TEM sampler, denoising is performed in the frequency domain on the noisy FSCs $\tilde{X}[m]$. 
% Using the IF-TEM approach, we do not have direct access to the samples of the signal itself. Instead, we only have access to time measurements. Therefore, we follow a similar idea as the Pan denoiser method, but instead of denoising the noisy signal measurements, we denoise the noisy Fourier coefficients $\tilde{X}[m]$ iteratively in the frequency domain. This denoising process is based on SVD, which helps reduce noise in the frequency domain representation $\tilde{X}[m]$.
When employing an IF-TEM sampler, the denoising process is performed in the frequency domain on the noisy FSCs $\tilde{X}[m]$. Unlike Pan denoiser approach which directly denoise the signal measurements, the IF-TEM approach only provides access to time measurements rather than the signal samples themselves. Therefore, we adapt a similar concept to the Pan denoiser technique, but instead of denoising the noisy signal measurements directly, we focus on denoising the noisy Fourier coefficients $\tilde{X}[m]$ iteratively. 
% A key advantage of this method is its direct application to the measurements, rather than $\tilde{\mathbf{S}}$, which can significantly improve performance, especially when the transformation from the measurements to the annihilated signal alters the noise structure.

In Fig. \ref{fig:denoisers}, a comparison of different denoising methods' performance is presented for the VPW-FRI pulse under varying noise levels. The matrix pencil and Cadzow approaches exhibit similar performance characteristics. Notably, Pan's technique outperforms all the other methods by a substantial margin. Additionally, the Cramer-Rao bound (CRB), which provides a theoretical lower limit on the best achievable performance by any unbiased estimator, derived in \cite{baechler2017sampling}, is displayed. Based on our evaluations, we opted for the Pan denoiser due to its compelling performance improvements and direct application to measured signals.

\begin{figure}[t]
\centering
\includegraphics[width=0.5\textwidth]{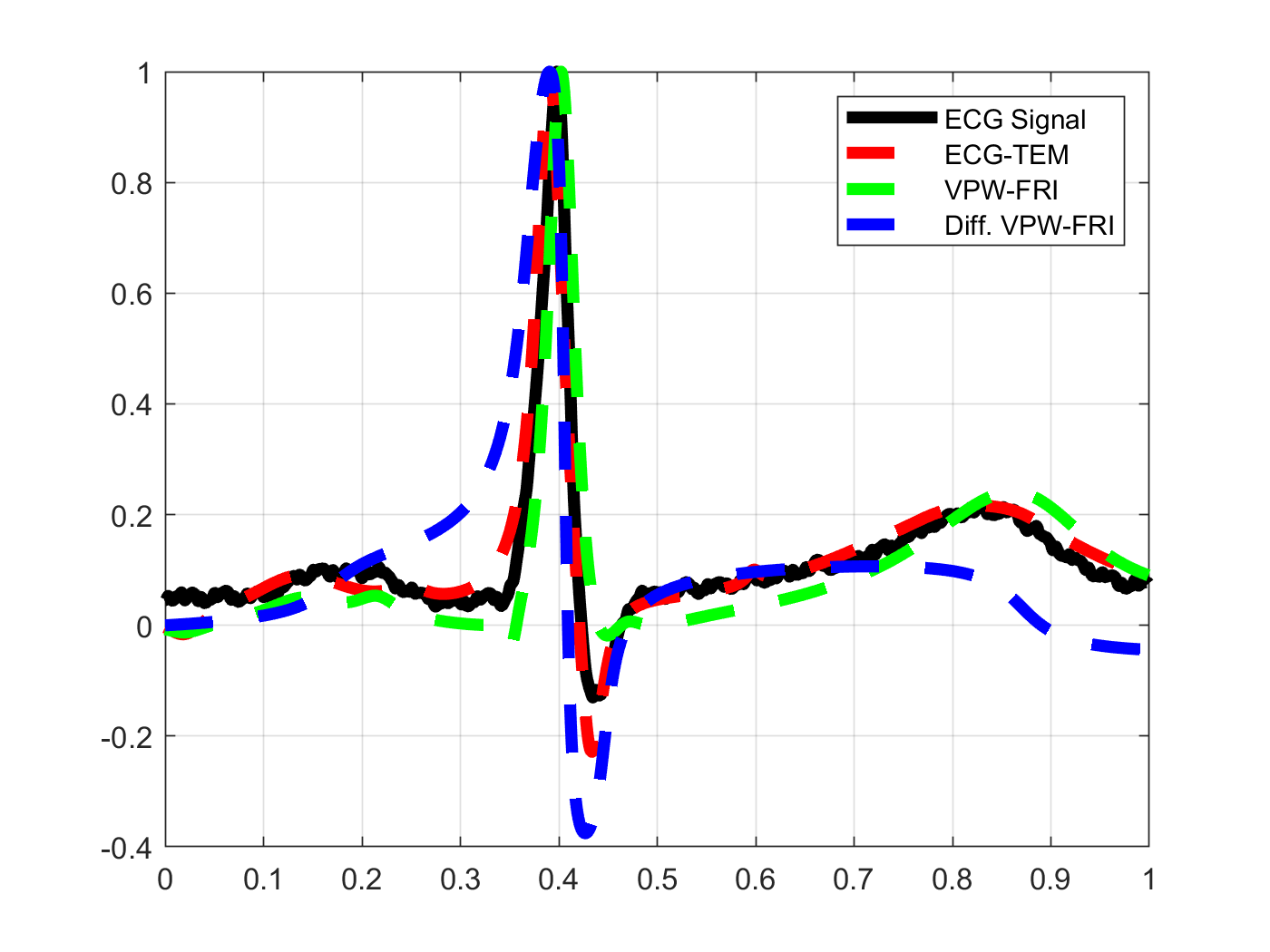}
\caption{ Reconstruction results of a single ECG pulse from \cite{schellenberger2020dataset}. Reconstruction is performed using 82 samples for each approach.}
\label{fig:ECG_All}
\end{figure}
\section{Simulations and HRM Application}
\label{sub:sim}
This section evaluates the performance of the proposed ECG-TEM approach for both ECG signal reconstruction and HRM. It compares it with existing uniform sampling techniques, namely the VPW-FRI method by \cite{baechler2017sampling} and the Differential VPW-FRI method by \cite{huang2022sub}. The evaluation is performed using real ECG recordings from 30 subjects obtained from the dataset provided by \cite{schellenberger2020dataset}.
Specifically, the resting scenario described in \cite{schellenberger2020dataset} is considered, where participants were lying on a table connected to several monitoring devices and instructed to breathe calmly and avoid large movements for at least 10 minutes.

The ECG signals are modeled as VPW-FRI pulses using the model in \eqref{eq:fri}, with $K=10$, as suggested by \cite{baechler2017sampling} and \cite{huang2022sub}. Fig. \ref{fig:ECG_All} shows an example of a single ECG pulse reconstruction, demonstrating the effectiveness of the proposed approach. As presented in Table 1, the ECG-TEM method provides the best reconstruction quality in terms of Root-Mean-Square Error (RMSE) compared to \cite{baechler2017sampling} and \cite{huang2022sub}.
Fig. \ref{fig:ECG_rec} illustrates the sampling and reconstruction process using the IF-TEM for an ECG signal. Subplot (a) depicts the sampling mechanism of a single filtered ECG pulse by the IF-TEM sampler, which generates a sequence of time instants representing the signal. Subplot (b) shows the reconstructed ECG signal obtained from the time instants produced by the IF-TEM sampler, demonstrating the capability of the proposed method to accurately recover the ECG signal.
% First, the problem of sampling and reconstructing ECG signals is addressed by modeling them as VPW-FRI pulses using the model in \eqref{eq:fri}, with $K=10$, as suggested by \cite{baechler2017sampling,huang2022sub}. Fig. \ref{fig:ECG_All} shows an example of a single ECG pulse reconstruction, demonstrating the effectiveness of our approach. As presented in Table 1, our ECG-TEM method provides the best reconstruction quality in terms of Root-Mean-Square Error (RMSE) compared to \cite{baechler2017sampling,huang2022sub}.
% Fig. \ref{fig:ECG_rec} illustrates the sampling and reconstruction process using the IF-TEM for an ECG signal. Subplot (a) depicts the sampling mechanism of a single filtered ECG pulse by the IF-TEM sampler, which generates a sequence of time instants representing the signal. Subplot (b) shows the reconstructed ECG signal obtained from the time instants produced by the IF-TEM sampler, demonstrating the capability of our proposed method to accurately recover the ECG waveform.

\begin{figure}[!t]
\begin{center}
\begin{tabular}{cc}
\subfigure[]{\includegraphics[width = 3in]{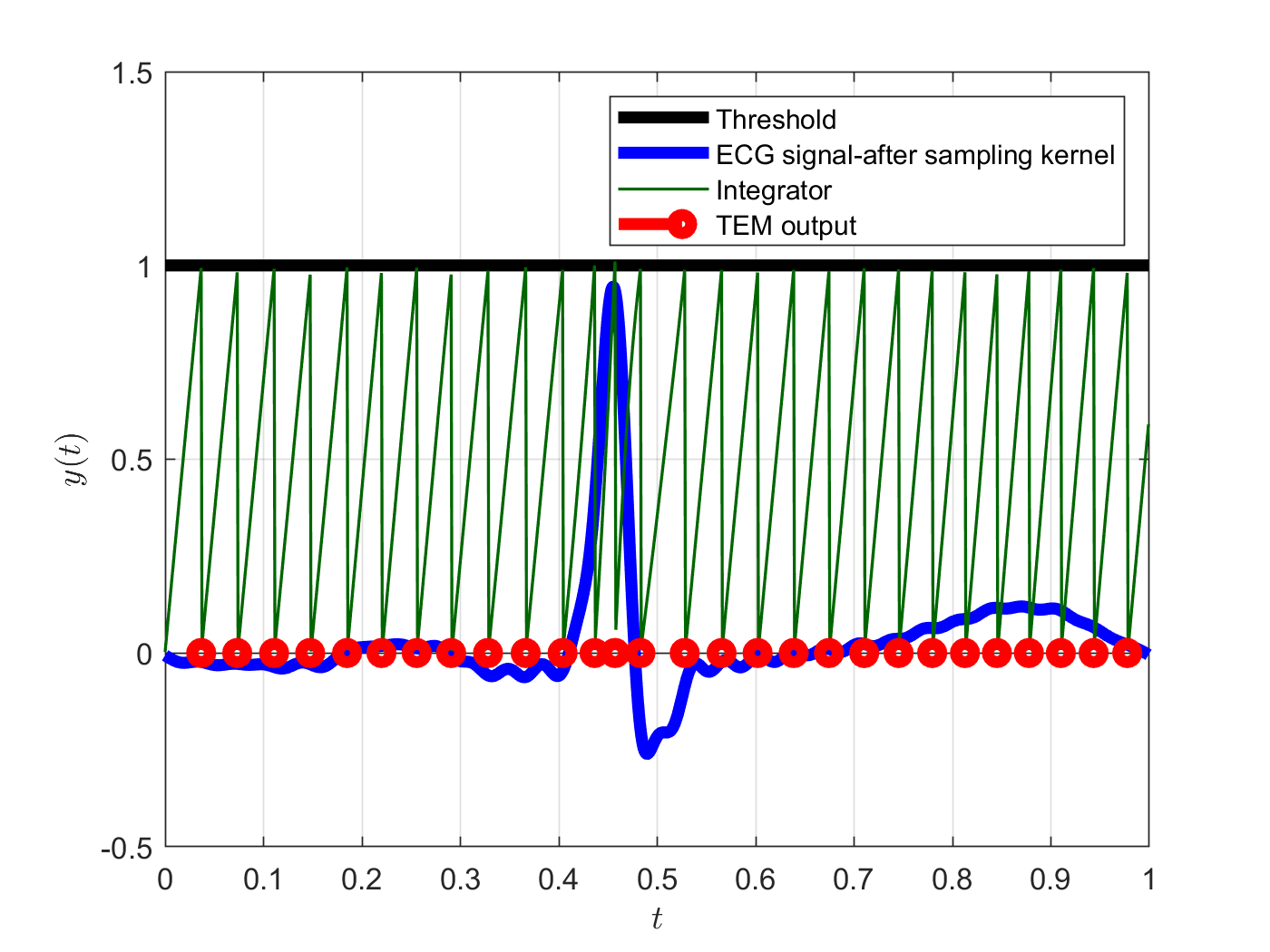}}\vspace{-.1in}\\
\subfigure[]{\includegraphics[width = 3in]{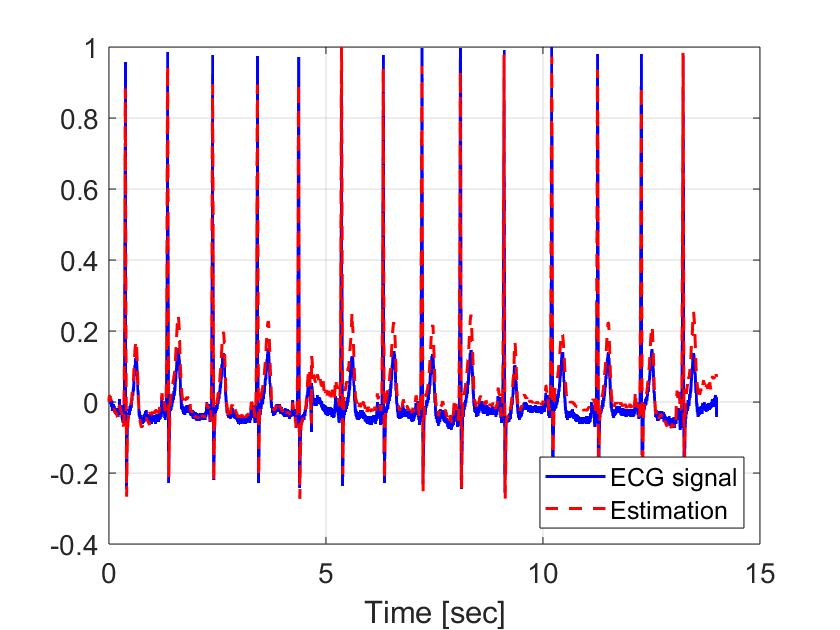}}
\end{tabular} 
\end{center}
\caption{IF-TEM sampling and reconstruction example. (a): Sampling mechanism of a single filtered ECG pulse by IF-TEM. (b): Reconstructed ECG signal. }
\label{fig:ECG_rec}
\end{figure}
\begin{figure}[ht]
\centering
\includegraphics[width=0.5\textwidth]{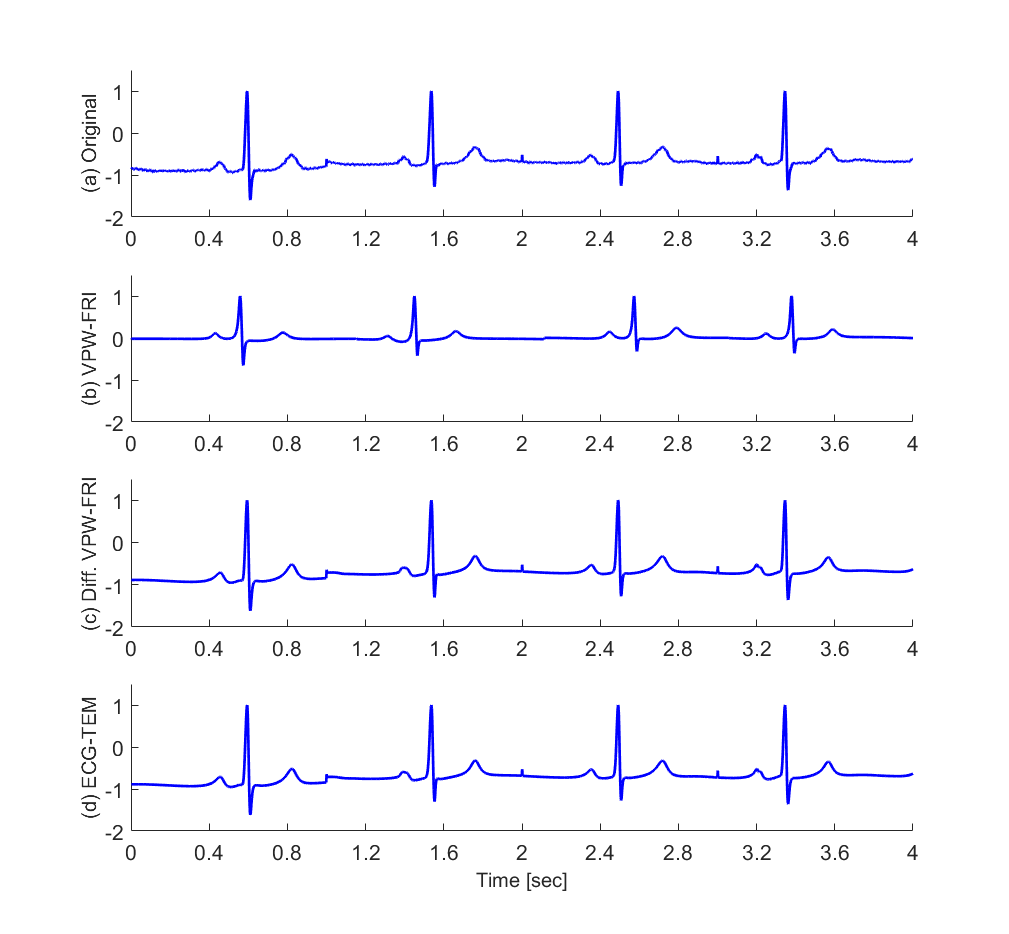}
\caption{ ECG signal reconstruction over a duration of four seconds: (a) Original ECG signal; (b) VPW-FRI reconstruction with 5 pulses per heartbeat (15 parameters/sec, $K=10$); (c) Differential VPW-FRI reconstruction with 5 pulses per heartbeat (15 parameters/sec, $K=10$); (d) ECG-TEM reconstruction with 5 pulses per heartbeat (15 parameters/sec, TEM parameters: $b=0.78$, $\delta=0.99$, and $\kappa = 0.018$, $K=10$).}
\label{fig:ECG_4SEC}
\end{figure}
\begin{table}
\caption{HR RMSE accuracy}
\vspace{-1em}
\begin{center}
\begin{tabular}{| l | l|}
  \hline
  \hspace{0.01cm} Method & RMSE \\ 
  \hline
  ECG-TEM & $0.005 $  \\ 
  \hline
  VPW-FRI \cite{baechler2017sampling} &$0.029$ \\
  \hline
      Differential VPW-FRI \cite{huang2022sub} &$0.14 $   \\
  \hline
\end{tabular}
\label{Table2}
\end{center}
\end{table}

Fig. \ref{fig:ECG_4SEC} depicts the ECG reconstruction for the first 4 seconds of the resting mode of record GDN0021 from the database \cite{schellenberger2020dataset}. Each patient in the dataset has 10 minutes of respiratory data, sampled at a standard 60 bpm with a 1Hz heart rate and 2000 samples per second. The selected signals were all sampled in the "rested" position.
Subplot (a) shows the original ECG signal, while subplots (b), (c), and (d) display the reconstructions obtained using different sampling and reconstruction methods:
(b) VPW-FRI estimation with 5 pulses per heartbeat, employing uniform sampling.
(c) Differentiated VPW-FRI estimation, also employing uniform sampling.
(d) ECG-TEM estimation, employing the proposed IF-TEM non-uniform sampling approach.
The parameter $K$, representing the number of pulses in the signal model, is set to 9 for all methods. The IF-TEM parameters used are $b=0.78$, $\delta=0.99$, and $\kappa = 0.018$.
Examining the results, it is evident that the ECG-TEM estimation in subplot (d) provides the most accurate reconstruction, closely matching the original ECG signal. The differentiated VPW-FRI estimation, shown in subplot (c), captures the overall shape reasonably well; however, it exhibits some inaccuracies in the amplitude, particularly in the R-peak region, which is crucial for HRM applications.

The reconstructed ECG signal is utilized to perform HRM, which is an application calculated from the R-peaks of the reconstructed ECG signal. The HRM results are compared to the HR calculated from a known synchronously sampled ECG signal at 2000 Hz \cite{schellenberger2020dataset}, which serves as the ground truth (GT) reference.
To conduct HRM, we sample the ECG signal using the ECG-TEM approach and then recover it. Subsequently, every half a second, an FFT-based peak selection \cite{adib2015smart,mercuri2019vital,sacco2020fmcw,alizadeh2019remote,antolinos2020cardiopulmonary} is applied to the R-peaks obtained from the last 40 seconds of the reconstructed signal, based on the resting heartbeat frequency range. With these settings for 10-minute monitoring, 450 HR data points are obtained for each participant, which are then compared to the GT. Fig. \ref{fig:HR} demonstrates that the proposed HRM method using IF-TEM exhibits a close resemblance to the reference ECG, outperforming other techniques.

% The reconstructed ECG signal is utilized to conduct HRM and compared to the heart rate calculated from a known synchronously sampled ECG at 2000 Hz \cite{schellenberger2020dataset}, which is considered the ground truth. To perform HRM, the ECG signal is first sampled using the IF-TEM approach and reconstructed. 
% Next, every half a second, an FFT-based peak selection \cite{adib2015smart,mercuri2019vital,sacco2020fmcw,alizadeh2019remote,antolinos2020cardiopulmonary} is performed on the R-peaks obtained from the last 40 seconds, based on a resting heartbeat frequency band. These settings for 10-minute monitoring lead to 450 points for comparison to the ground truth for each participant. Fig. \ref{fig:HR} demonstrated that the proposed HRM using IF-TEM shows a great resemblance to the reference ECG, compared to other techniques.

To evaluate the quality of the estimates, the following statistical metrics are used: 1. Success rate, defined as the percentage of time when the HR estimate differs from the reference output by less than 2 beats per minute (b.p.m.), 2. Pearson Correlation Coefficient (PCC), 3. Mean-Absolute Error (MAE), and 4. Root Mean Square Error(RMSE).
Various signal-to-noise ratio (SNR) cases are investigated, where the SNR is defined as the inverse of an independent and identically distributed Gaussian noise variance. Each SNR case involves ECG data of 30 individuals from \cite{schellenberger2020dataset}. For each statistical metric and SNR value, the performance score is calculated as the median across all 30 participants.

Fig. \ref{fig:HR_comparison} shows the Success-Rate, PCC, MAE, and RMSE for HR estimation by all examined methods as a function of SNR. The HRM based on IF-TEM outperforms other compared methods in all four metrics for every SNR value. Detailed median accuracy scores for SNR = 2 dB in Table 2 demonstrate superior HRM results of IF-TEM.

% Fig. \ref{fig:HR_comparison} shows the Success-Rate, PCC, MAE, and RMSE for HR estimation by all examined methods as a function of the SNR. It is evident that the HRM based on the IF-TEM outperforms the other compared methods in all four metrics for every SNR value. Detailed median accuracy scores for the case of SNR = 2 dB are provided in Table 2, demonstrating superior HRM results. 
% To further emphasize the comparison with \cite{baechler2017sampling}, it is important to note that after calculating the FSCs, we use the same reconstruction method. However, our method achieves better reconstruction due to the filter employed, which excludes the zero frequency (the DC term). As shown in \cite{naaman2021fri}, the reconstruction is more stable and robust without the DC term, contributing to the improved performance observed in the simulation results.

Compared to \cite{baechler2017sampling}, note that while we use the same reconstruction method after calculating the FSCs, our ECG-TEM method achieves superior reconstruction quality. This improvement is attributed to two key factors: 1) the filter employed in IF-TEM excludes the zero frequency (DC term), and 2) the FSCs are calculated from the time instances using partial summation, enhancing the overall robustness of the method \cite{naaman2021fri}.
When the filter used in IF-TEM is applied to the VPW-FRI method \cite{baechler2017sampling}, the results do not improve. This observation indicates that the better performance of IF-TEM is not solely due to the filter itself. Rather, the combined effect of the filter excluding the DC term and the partial summation technique used in the reconstruction contributes to the superior reconstruction quality and overall performance achieved by the IF-TEM method compared to \cite{baechler2017sampling}.

\begin{figure}[!t]
\begin{center}
\begin{tabular}{cc}
\subfigure[VPW-FRI]{\includegraphics[width = 1.8in]{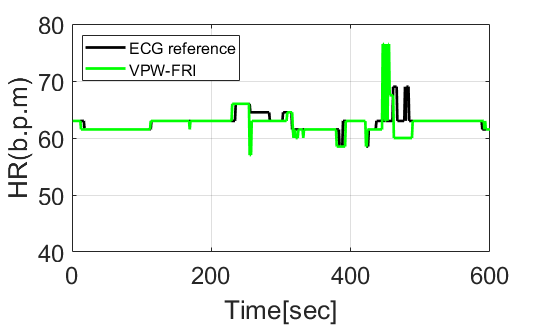}}&
\hspace{-.1in}\subfigure[Diff. VPW-FRI]{\includegraphics[width = 1.8in]{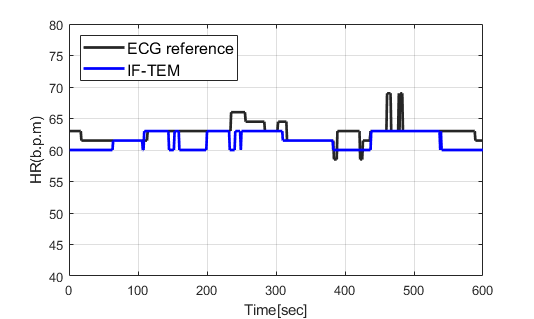}}\vspace{-.1in}
\end{tabular} 
\subfigure[ECG-TEM]{\includegraphics[width = 1.8in]{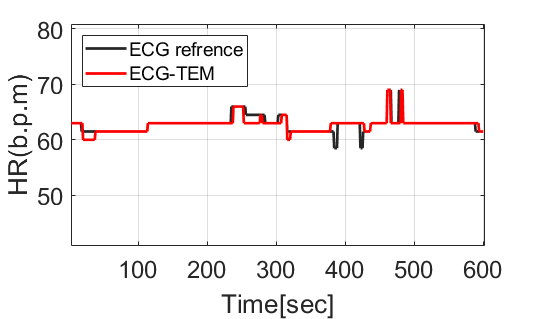}}\vspace{-.1in}
\end{center}
\caption{HR monitoring performance for $\textrm{SNR}=2$ [dB] (a): VPW-FRI \cite{baechler2017sampling}. (b): Differentiated VPW-FRI. (c) ECG-TEM. \cite{huang2022sub}.}
\label{fig:HR}
\end{figure}

\begin{figure}[!t]
\begin{center}
\begin{tabular}{cc}
\subfigure[]{\includegraphics[width = 1.7in]{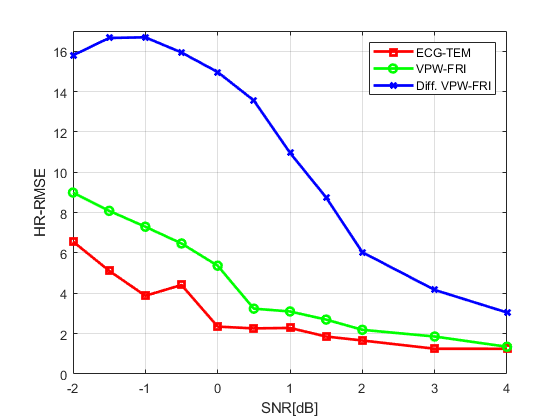}}&
\hspace{-.1in}\subfigure[]{\includegraphics[width = 1.7in]{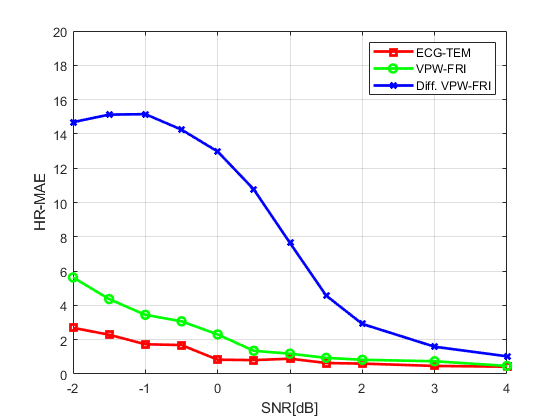}}\vspace{-.1in}\\
\subfigure[]{\includegraphics[width = 1.7in]{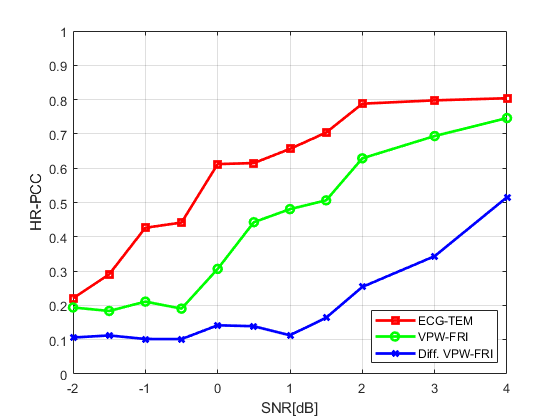}}&
\hspace{-.1in}\subfigure[]{\includegraphics[width = 1.7in]{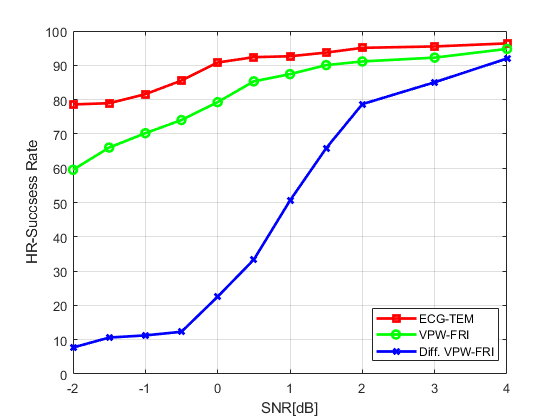}}\vspace{-.1in}
\end{tabular} 
\end{center}
\caption{ HRM performance vs. SNR. \textbf{(a)} HR RMSE. \textbf{(b)} HR MAE. \textbf{(c)} HR PCC. \textbf{(d)} HR Success Rate.}
\label{fig:HR_comparison}
\end{figure}

%%%%%%%%%%%%%%%%%%%%%%%%%%%%%%%%%%%%%%%%%%%%%%%%%%%%%%%%%%%%%%%%%%%%%%%%%%%%%%%%

\section{IF-TEM Analog board and Hardware Experiments}
\label{sec:HW_spec}

\begin{figure*}[t!]
	\vspace{-0.5cm}
		\centering
		\includegraphics[width = 1\textwidth]{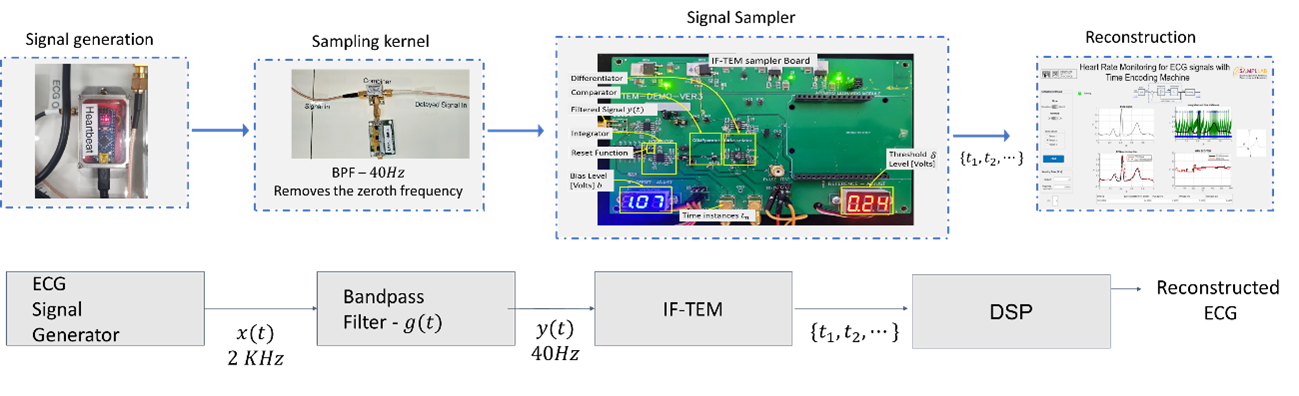}
		\vspace{-0.5cm}
		\caption{Block diagram of the ECG-TEM hardware prototype. The system consists of a signal generator, a sampling kernel (40 Hz BPF), and an IF-TEM sampler. The signal generator produces a realistic ECG signal, which is then filtered by the BPF to remove the zero-frequency component. The BPF in this setup is designed specifically for ECG signals, while the one in \cite{naaman2023hardware} was designed for FRI signals. The filtered signal is then sampled by the IF-TEM sampler board, which has been modified from the one in \cite{naaman2023hardware} by adjusting the RC values of the integrator component to accommodate ECG signal characteristics. The ECG signal, modeled as a VPW-FRI signal, is recovered from the sub-Nyquist samples obtained by the IF-TEM sampler using Algorithm 1.}
		\label{fig:hwflow}
\end{figure*}
\subsection{ECG-TEM Analog Board}
% This section describes the proposed ECG-TEM hardware prototype specifications.
% We begin by discussing the key components of the ECG-TEM hardware implementation, as well as various circuit design considerations.
% As shown in Fig. \ref{fig:hwflow}, the analog board comprises three sequential stages: the generation of an ECG signal, band-pass filtering, and an IF-TEM sampler.

\begin{figure}[h!]
		\centering
		\includegraphics[width=0.5\textwidth]{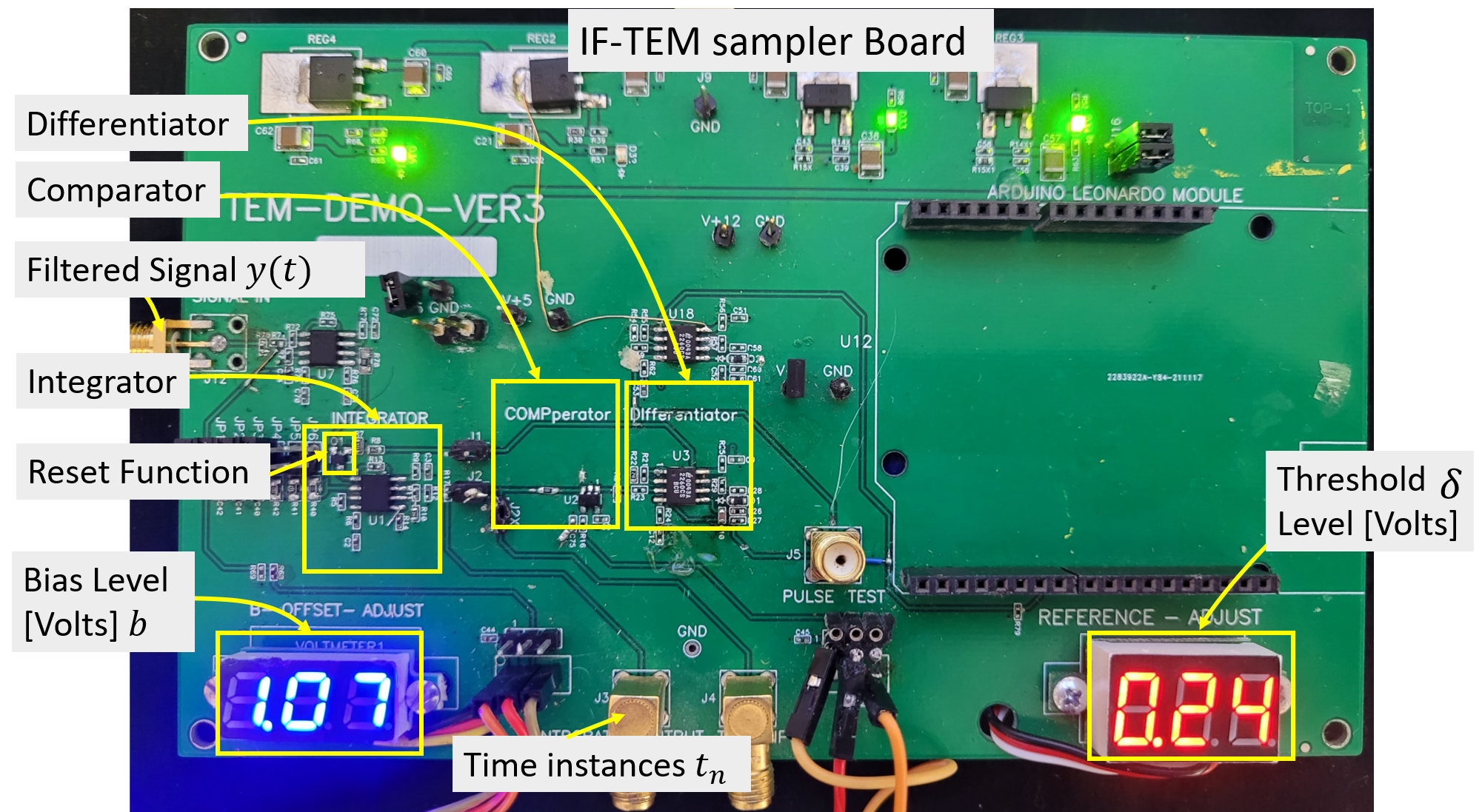}
		\caption{IF-TEM hardware board \cite{naaman2023hardware}.}
		\label{fig:IF-TEM_board}
\end{figure}
\begin{figure*}[t!]
    \centering
    \includegraphics[width=0.7\textwidth]{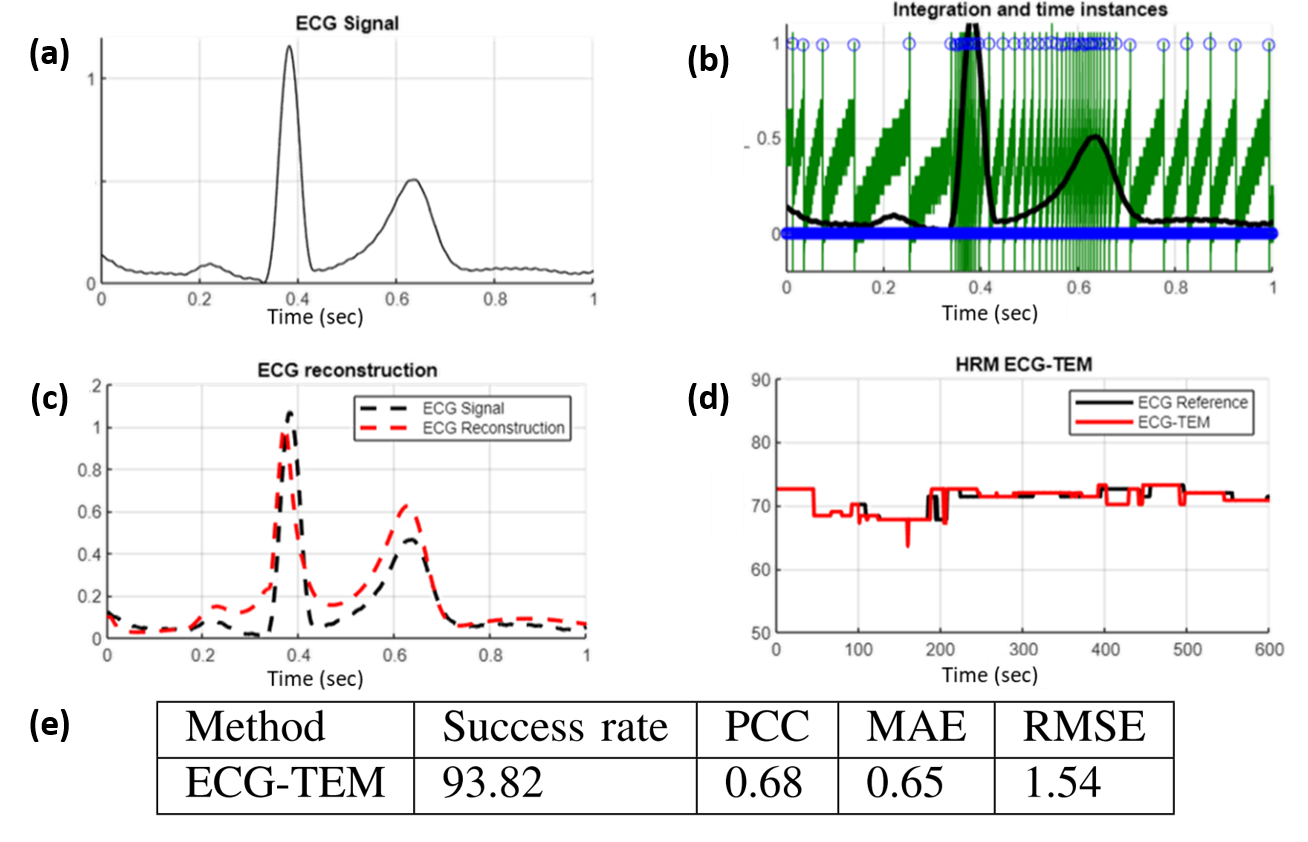}
    \caption{(a). ECG input signal $x(t)$ (b) BPF output $y(t)$ (green), and the IF-TEM output resulting in 49 samples (blue). (c). sampling and reconstruction using IF-TEM hardware: the input signal $x(t)$ (black) and its reconstruction (red). (d). HR estimate (red) with the reference output (black). (e). Different evaluation metrics used to assess the quality of our heart rate monitoring. }
    \label{fig:IAF0222}
\end{figure*}
\begin{figure*}[t!]
    \centering
    \includegraphics[width=0.7\textwidth]{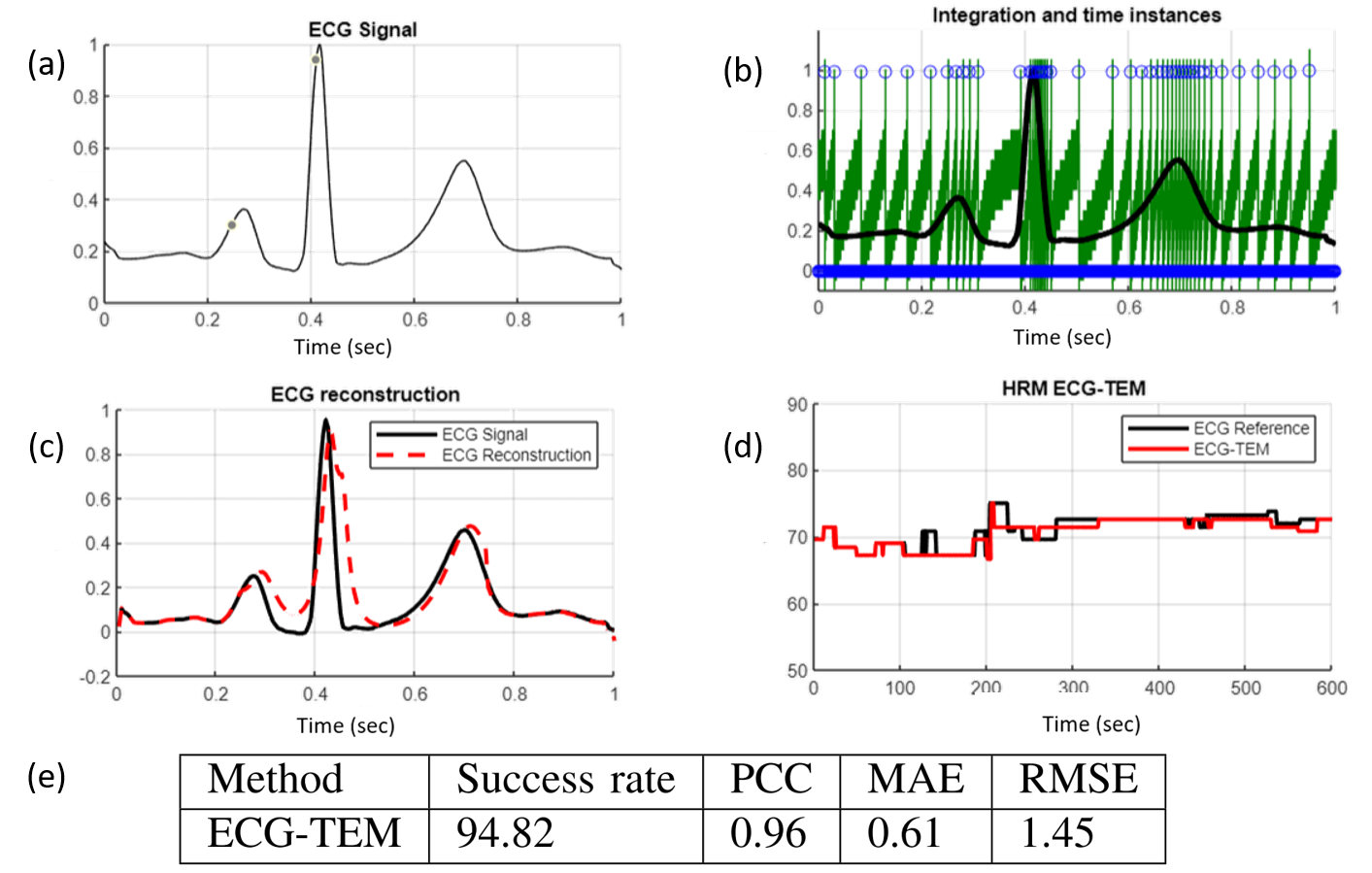}
    \caption{(a). ECG input signal $x(t)$ (b) BPF output $y(t)$ (green), and the IF-TEM output resulting in 46 samples (blue). (c). sampling and reconstruction using IF-TEM hardware: the input signal $x(t)$ (black) and its reconstruction (red). (d). HR estimate (red) with the reference output (black). (e). Different evaluation metrics used to assess the quality of our heart rate monitoring. }
    \label{fig:IAF0223}
\end{figure*}
This section describes the proposed ECG-TEM hardware prototype specifications. We begin by discussing the key components of the ECG-TEM hardware implementation, as well as various circuit design considerations. As shown in Fig. \ref{fig:hwflow}, the analog board comprises three sequential stages: the generation of an ECG signal, band-pass filtering, and an IF-TEM sampler.

In \cite{naaman2023hardware}, we introduced an IF-TEM hardware board designed for FRI signals. The IF-TEM circuit presented in this paper differs from the IF-TEM sampler in \cite{naaman2023hardware} by three main components: here, we designed a signal generator for injecting the ECG input, the filter is designed for ECG signals and not for FRI, and the RC values of the integrator component in the IF-TEM sampler are different. Specifically, for an ECG signal, a smaller capacitor and a larger resistor are required. Next, we specify the details of each component.

The ECG signal generator is a vital component responsible for generating the desired ECG waveform, which is a fundamental step in the overall functionality of the hardware system. The generated ECG signal possesses realistic amplitude and frequency characteristics, typically ranging from 0.05 Hz to 100 Hz. The filter also referred to as the sampling kernel, is employed to eliminate the zero-frequency component from the signal. The precise positioning of the sampling kernel, which essentially acts as a band-pass filter (BPF), is critical for the sub-Nyquist sampling and subsequent reconstruction of ECG signals using an IF-TEM (see Section \ref{sub:Kernel}). The output of the filter, denoted as $y(t)$ \eqref{eq:yt_by_x22_M}, is then fed into the IF-TEM sampler. A prototype of the IF-TEM sampler is illustrated in Fig. \ref{fig:IF-TEM_board} \cite{naaman2023hardware}.

The core components of the IF-TEM sampler comprise the bias $b$, an integrator, a comparator with threshold level $\delta$, and a reset function. To ensure sufficient samples for signal reconstruction, it is crucial to guarantee that the threshold $\delta$ is attained at least as many times as the desired number of samples. By introducing the bias $b$ to the input signal $y(t)$, the integrator receives a signal that is always non-negative. In this scenario, the integration of a non-negative signal results in a positive function, ensuring that the threshold is consistently reached. It is essential to carefully select an appropriate bias value to ensure the proper functioning of the IF-TEM system. The output of the integrator is then sent to the comparator, which compares the integrator voltage against a predefined threshold $\delta$. The threshold is a constant DC voltage that is implemented in hardware using a manually adjustable potentiometer.

The comparator is responsible for comparing the voltage generated by the integrator against a predefined threshold value. When the integrator voltage reaches or exceeds the threshold, the comparator's output changes state. If the comparator's input is below the threshold, it will output a logical '0', whereas if the input is above the threshold, the output will be '1'. In other words, the comparator will produce a sequence of logical '1' values whenever the integrator voltage hits the threshold. This change in the comparator's output signal indicates that the threshold has been reached and triggers the subsequent stage in the IF-TEM process.

The output of the comparator is then fed into a differentiator, which generates a short pulse that activates the fast reset function. The reset function consists of an amplifier and a field-effect transistor (FET) that work together to rapidly and completely discharge the integrator capacitor.

Next, we present the hardware experiments for ECG signal reconstruction and HRM.

\begin{table}
\begin{center}
\caption{HR estimation - median accuracy}
% \captionof{Table 2: }{{HR estimation - median accuracy}}
\vspace{-1em}
\begin{tabular}{| l | l|l|l|l|}%{ | m{5em} | m{1cm}| m{1cm} | } 
  \hline
  \hspace{0.01cm} Method & Success rate  & PCC & MAE & RMSE \\ 
  \hline
  ECG-TEM & $95.1 \%$ & $ 0.78$ & $0.60$  & $1.7$ \\ 
  \hline
  VPW-FRI \cite{baechler2017sampling} &$91.1 \%$ & $ 0.63$ & $0.83$  & $2.19$ \\
  \hline
      Differential VPW-FRI \cite{huang2022sub} &$78.7 \%$ &  $ 0.25 $ & $ 2.92  $ & $6.03$ \\
  \hline
\end{tabular}
\label{Table11}
\end{center}
\end{table}
\subsection{Hardware Experiments}
\label{sec:HW_experiments}

To evaluate the potential and feasibility of our system, we conducted experiments on the proposed ECG-TEM hardware system that we constructed. As illustrated in Figure \ref{fig:IAF0223}(a), we considered a real ECG input signal, denoted $x(t)$, consisting of five pulses with varying widths. The sampling kernel discussed in Section \ref{sub:Kernel} was employed in these experiments. The parameters for the IF-TEM circuit were set to a value of $\kappa=3\times 10^{-8}$, with a bias of $b=3V$ and a threshold of $\delta=1.5V$ and $K=10$.
The IF-TEM parameters were carefully selected to comply with the constraints outlined in \eqref{eq:sample_bound0}.

As demonstrated in Figures \ref{fig:IAF0222}(b) and \ref{fig:IAF0223}(b), the filtered signal $y(t)$ was fed into an IF-TEM sampler, which produced 49 and 46 time instances $t_n$, respectively, resulting in a firing rate of 49 Hz and 46 Hz, which is approximately 2.5 times the rate of innovation and 0.025 times the Nyquist rate.
Figures \ref{fig:IAF0222}(c) and \ref{fig:IAF0223}(c) illustrate a comparison between the original input signal and the estimated signal. This comparison demonstrates that the parameters of the ECG system can be robustly estimated while operating at a rate lower than the Nyquist rate.
Figures \ref{fig:IAF0222}(d) and \ref{fig:IAF0223}(d) show the calculation of the HRM, which is derived from the R peaks of the ECG signal.
Finally, Figures \ref{fig:IAF0222}(e) and \ref{fig:IAF0223}(e) present the different metrics that were used to evaluate the quality of our HRM estimates: 1. Success rate, defined here as the percentage of time in which the HR estimate differed from the reference output by less than 2 b.p.m., 2. Pearson Correlation Coefficient (PCC), 3.MAE, and 4. RMSE.

Our TEM hardware enables efficient sub-Nyquist sampling and recovery of ECG signals, benefiting heart rate monitoring applications. The ECG signal is filtered to remove its zero-frequency component, improving noise resilience.

As we demonstrate with one reconstruction example for proof of concept, the processed filtered signal, $y(t)$, is sampled using an IF-TEM sampler, resulting in a firing rate of 42-80Hz, equivalent to approximately $1/20$-$1/40$ of the Nyquist rate. While we can share more examples, this single example illustrates that we can robustly recover the ECG signal using an IF-TEM sampler.

\section{Conclusion}
\label{sec:Conclusions}
% In this paper, we studied the sampling and reconstruction frameworks for ECG signals utilizing IF-TEMs. We begin by providing theoretical guarantees for the reconstruction of ECG signals modeled as VPW-FRI signals.
% While the VPW-FRI framework offers promising potential for low-rate signal compression, the presence of noise can complicate accurate reconstruction. Our innovative time-based sub-Nyquist sampling approach, leveraging IF-TEMs as energy-efficient samplers, demonstrates promise for heart rate monitoring applications.
% To enhance noise resilience, we employ filtering techniques, including the use of a compactly supported robust filter. Numerical experiments validate the effectiveness of our approach, and innovative hardware validations bridge the gap between theory and practical implementation.
In this paper, we studied sampling and reconstruction frameworks for ECG signals utilizing IF-TEMs. We provided theoretical guarantees for reconstruction of ECG signals modeled as VPW-FRI signals and presented an approach to portable HRM based on the TEM-ECG framework. The proposed method employs the IF-TEM as a power-efficient, time-based, asynchronous sampler, addressing the critical need for low power consumption and sampling rate in long-term ECG monitoring. Numerical experiments validate the effectiveness of our approach, demonstrating accurate reconstruction of ECG signals and heart rate estimation. Furthermore, hardware validations bridge the gap between theory and practical implementation, showcasing the potential for real-world applications.
%%%%%%%%%%%%%%%%%%%%%%%%%%%%%%%%%%%%%%%%%%%%%%%%%%%%%%%%%%%%%%%%%%%%%%%%%%%%%%%%

\appendix
\label{app:a}
 The filtered signal $y(t)$ is defined in \eqref{eq:yt_by_x22_M} as
\begin{equation*}
      y(t) =\sum_{m\in \mathcal{M}} X[m] e^{jm\omega_0 t},
\end{equation*}
where $\omega_0 = \frac{2\pi}{T}$, $m\in \mathcal{M}=\{-M,\cdots,-1,1,\cdots,M \}$ with finite $M$ and $X[m]$ denotes the FSCs of $x(t)$. From \eqref{eq:fri}, 
\begin{equation}
    X[m] = \sum_{k=0}^{K-1} X_k[m],
\end{equation}
where $X_k[m]$ are the Fourier coefficients of $x_k(t)$ which are $T$-periodic.
Substituting this representation into the expression for $y(t)$ from \eqref{eq:yt_by_x22_M}, we get:

\begin{equation}
    y(t) = \sum_{k=0}^{K-1} \sum_{m\in \mathcal{M}} X_k[m] e^{jm\omega_0 t}.
\end{equation}

% Given that $\mathcal{M}$ is a finite set, $k\leq K$ and $X_k[m]$ are the FSCs of \(x_k(t)\), we conclude that the sum over $k$ and $m$ of $X_k[m]$ is real and finite.
% Consider the set $\{X_k[m] e^{jm\omega_0 t}\}_{m\in\mathcal{M}}$, all of its elements are finite since it composed of the finite FSCs $X_k[m]$ multiplied by the complex exponential term $e^{jm\omega_0 t}$. Moreover the set $\{X_k[m] e^{jm\omega_0 t}\}_{m\in\mathcal{M}}$ is finite since $mathcal{M}$ i finite.
Given that $\mathcal{M}$ is a finite set, $k \leq K$, and $X_k[m]$ are the FSCs of $x_k(t)$, we conclude that the sum over $k$ and $m$ of $X_k[m]$ is real and finite.
Consider the set $\{X_k[m] e^{jm\omega_0 t}\}_{m\in\mathcal{M}}$, all of its elements are finite since they are composed of the finite FSCs $X_k[m]$ multiplied by the complex exponential term $e^{jm\omega_0 t}$. Moreover, the set is finite since $\mathcal{M}$ is finite.
Let $M_{X,k}$ denote the maximum magnitude of the set $\{X_k[m] e^{jm\omega_0 t}\}_{m\in\mathcal{M}}$. 
% Since $|X_k[m]|$ is finite $\forall m \in \mathcal{M}$, $M_{X,k}$ is also finite. 
Thus, we have:
\begin{equation}
|X_k[m] e^{jm\omega_0 t}| \leq M_{X,k}, \quad \forall m \in \mathcal{M}.
\end{equation}
Now, considering the sum over $k$ and $m$, we have:
\begin{equation}
\label{eq:app}
|y(t)| \leq \sum_{k=0}^{K-1} \sum_{m\in \mathcal{M}} |X_k[m] e^{jm\omega_0 t}| \leq \sum_{k=0}^{K-1} \sum_{m\in \mathcal{M}} M_{X,k}.
\end{equation}
Since both $K$ and $\mathcal{M}$ are finite sets, the bound in \eqref{eq:app} is finite.
% Define $M_{X,k}$ as the maximum magnitude of $\{X_k[m] e^{jm\omega_0 t}\}_{m\in\mathcal{M}}$. Since $|X_k[m]|$ is finite $\forall m \in \mathcal{M}$, $M_{X,k}$ is also finite. Thus, we have: 
% \begin{equation} |X_k[m] e^{jm\omega_0 t}| \leq M_{X,k}, \quad \forall m \in \mathcal{M}, \forall t. \end{equation}
% Now, considering the sum over $k$ and $m$, we have:
% \begin{equation}
% \label{eq:app}
%     |y(t)| \leq \sum_{k=0}^{K-1} \sum_{m\in \mathcal{M}} |X_k[m] e^{jm\omega_0 t}| \leq \sum_{k=0}^{K-1} \sum_{m\in \mathcal{M}} M_{X,k}.
% \end{equation}
% Since both $K$ and $\mathcal{M}$ are finite sets, the bound in \eqref{eq:app} is finite.

%%%%%%%%%%%%%%%%%%%%%%%%%%%%%%%%%%%%%%%%%%%%%%%%%%%%%%%%%%%%%%%%%%%%%%%%%%%%%%%%

\bibliographystyle{ieeetr}
\bibliography{refs}
\end{document}